\newtheorem{thm}{Theorem}[section]
\newtheorem{prop}[thm]{Proposition}
\theoremstyle{definition}
\newtheorem{defn}[thm]{Definition}
\journal{Journal of \LaTeX\ Templates}
\makeatletter \@addtoreset{equation}{section}
\renewcommand{\theequation}{\arabic{section}.\arabic{equation}}
\begin{document}

\begin{frontmatter}

\title{
Riemann-Hilbert problem for the sextic nonlinear Schr\"{o}dinger equation with non-zero boundary conditions}
\tnotetext[mytitlenote]{Project supported by the Fundamental Research Fund for the Central Universities under the grant No. 2019ZDPY07.\\
\hspace*{3ex}$^{*}$Corresponding author.\\
\hspace*{3ex}\emph{E-mail addresses}: 
sftian@cumt.edu.cn,
shoufu2006@126.com (S. F. Tian)
}

\author{Xin Wu, Shou-Fu Tian$^{*}$, Jin-Jie Yang and Zhi-Qiang Li}
\address{
School of Mathematics and Institute of Mathematical Physics, China University of Mining and Technology, Xuzhou 221116, People's Republic of China
}

\begin{abstract}
We consider  a matrix Riemann-Hilbert problem for the sextic nonlinear Schr\"{o}dinger equation with a non-zero boundary conditions at infinity.
Before analyzing the spectrum problem, we introduce a Riemann surface and uniformization coordinate variable in order to avoid multi-value problems. Based on a new complex plane, the direct scattering problem perform a detailed analysis of the analytical, asymptotic and symmetry properties of the Jost functions and the scattering matrix.  Then, a generalized Riemann-Hilbert problem (RHP) is successfully  established from the results of the direct scattering transform. In the inverse scattering problem, we discuss the discrete spectrum, residue condition, trace formula and theta condition under simple poles and double poles respectively, and further solve the solution of a generalized RHP. Finally, we derive the solution  of the equation for the cases of  different poles  without reflection potential. In addition, we   analyze the localized structures and dynamic behaviors of the resulting  soliton solutions by taking some appropriate values of the parameters appeared in the solutions.
\end{abstract}

\begin{keyword}
The sextic nonlinear Schr\"{o}dinger equation \sep Non-zero boundary conditions \sep
A generalized Riemann-Hilbert problem \sep Simple poles and double poles \sep
Breathers and soliton solutions.
\end{keyword}

\end{frontmatter}


\section{Introduction}

The nonlinear Schr\"{o}dinger (NLS) equation
\begin{align}
iq_{t}+\frac{1}{2}q_{xx}+|q|^{2}q=0,
\end{align}
is a classic physical model, which is completely integrable. It has significant physical significance in many aspects of research such as deep water waves, fluid mechanics, plasma physics, nonlinear optics, Bose-Einstein condensation. For instance, it can describe the grouping evolution of quasi-monochromatic waves slowly changing in weakly nonlinear dispersion media. Because of the rich physical meaning of the NLS equation, it has attracted a lot of attention. Especially after Zakharov and Shabat's research about the NLS equation\cite{Shabat-1972}-\cite{Zakharov-1972}, there are more and more researches to extend it \cite{1}-\cite{12}.

Furthermore, on the basis of the previous results, it is found that high-order dispersion terms and other profound physical effects have more in-depth significance in the physical research of equations. For example, pulses of shorter duration propagate in optics along the fiber \cite{Chowdury-2015}-\cite{Chen-2016}. Therefore, the classic NLS equations can no longer meet the requirement of the research and the study of the higher-order NLS equation is more worthy of being put forward and carried out. In this work, we mainly consider the sextic nonlinear Schr\"{o}dinger (sNLS) equation \cite{Ankiewicz-2016}
\begin{align}\label{Q1}
\begin{aligned}
&iq_{t}+\frac{1}{2}(q_{xx}+2|q|^{2}q)+\delta q_{xxxxxx}+\delta\left[60q^{*}|q_{x}|^{2}+50(q^{*})^{2}q_{xx}+2q^{*}_{xxxx}\right]q^{2}\\
&+\delta q\left[12q^{*}q_{xxxx}+8q_{x}q^{*}_{xxx}+22|q_{xx}|^{2}\right]
+\delta q\left[18q_{xxx}q^{*}_{x}+70(q^{*})^{2}q^{2}_{x}\right]+20\delta(q_{x})^{2}q^{*}_{xx}\\
&+10\delta q_{x}\left[5q_{xx}q^{*}_{x}+3q^{*}q_{xxx}\right]
+20\delta q^{*}q^{2}_{xx}+10\delta q^{3}\left[(q^{*}_{x})^{2}+2q^{*}q^{2}_{xx}\right]+20\delta q|q|^{6}=0,
\end{aligned}
\end{align}
where $q(x,t)$ is the complex variable about $x$ and $t$, the asterisk denotes the complex conjugation,  and $\delta$ is a real coefficient. In addition, \eqref{Q1} will be reduced to the classic nonlinear Schr\"{o}dinger equation when $\delta=0$. Some results of the sNLS equation have also been studied in various ways such as analytic breather solutions have been obtained through the Darboux transformation in \cite{Sun-2018}, and the soliton solutions also have been obtained through inverse scattering transform in \cite{Wu-2020}.

In this work, we are dedicated to study  the inverse scattering of sNLS equation with non-zero boundary conditions at infinity, i.e.,
\begin{align}
q(x,t)\rightarrow q_{\pm}e^{i\left(q^{2}_{0}+20\delta q^{6}_{0}\right)t}, \quad x\rightarrow\pm\infty,
\end{align}
here $|q_{\pm}|=q_{0}\neq0$. Different from the case of zero boundary value, it will be more difficult to establish and solve the Riemann-Hilbert problem.

It is well known that the inverse scattering transform (IST) has developed into a powerful analysis tool for solving a large class of pure and applied mathematics, which plays an indispensable important role in the field of nonlinear science. IST was first applied to solve the Korteweg-de Vries equation   by Garderner, Greene, Kruskal and Miurra (GGKM) in 1967 \cite{Gardner-1967}, and then it was found that IST can also be applied to the NLS equation after the research of Zakharov and Shabat \cite{Shabat-1972}.  After IST was further developed, it was widely used in various equations \cite{NZBC-1}-\cite{Fan-2008}.
According to what we know, the research of the sNLS equation with NZBCs \eqref{Q1} by using the inverse scattering transform has not been reported. Therefore, the main purpose of our work is to find the soliton solution of the sNLS equation and study its physical meanings, so as to enrich the research of Schr\"{o}dinger equation.

The structure of this work is given as follows.
In section 2, we mainly study the direct scattering. Riemann surface and uniformization variables are introduced to avoid double-valued problems. Based on this result, we then discuss the analytical, symmetry and asymptotic properties of the Jost function and the scattering matrix. In section 3, we establish a generalized Riemann-Hilbert problem from the basis of direct scattering, and then we also solve the sNLS equation by solving the generalized RHP. In section 4, the inverse scattering problem with simple poles is discussed. To start with, we recover the potential by reconstructing the formula based on the discrete spectrum and residue conditions. Furthermore, trace formulate and theta condition are derived. Finally, the soliton solutions are given under reflection-less potential and its propagation behavior is also illustrated graphically when selecting appropriate parameters. In section 5, similar to simple poles, the results of the inverse scattering problem with double poles are obtained. In the end, some conclusions and discussions are presented.

\section{Direct scattering transform}

In this section, similar but different from the calculation of the zero boundary conditions, we will make some transformations and introduce new definitions under the non-zero boundary conditions.

\subsection{Riemann surface and uniformization coordinate}

In this subsection, Riemann surface will be introduced to solve the problem of multi-valued functions. The corresponding uniformization variable will also be introduced to solve the complexity.

The Lax pair of \eqref{Q1} reads
\begin{align}\label{2.1}
\begin{split}
&\psi_{x}=X\psi, X=ik\sigma_{3}+Q,\\
&\psi_{t}=T\psi, T=\left(\begin{array}{cc}
    T_{11} & T_{12} \\
    T_{21} & -T_{11}  \\
 \end{array}\right),
\end{split}
\end{align}
here $\psi=\left(\psi_{1}(x,t),\psi_{2}(x,t)\right)^{T}$,
\begin{align*}
\sigma_{3}=\left(\begin{array}{cc}
    1 & 0 \\
    0 & -1  \\
 \end{array}\right),
Q=\left(\begin{array}{cc}
    0 & iq^{*}(x,t) \\
    iq(x,t) & 0  \\
 \end{array}\right),
\end{align*}

\begin{gather*}
T_{11}=i\left[-\frac{1}{2}|q|^{2}-10\delta|q|^{6}-5\delta\left(q^{2}_{x}(q^{*})^{2}\right)
-\delta|q_{xx}|^{2}
+\delta\left(q_{x}q^{*}_{xxx}-q^{*}q_{xxxx}-qq^{*}_{xxxx}\right)\right.\\
\left.-10\delta|q|^{2}\left(q_{xx}q^{*}+q^{*}_{xx}q\right)\right]
+ik\left[12i\delta|q|^{2}\left(q_{x}q^{*}-q^{*}_{x}q\right)
+2i\delta\left(q_{x}q^{*}_{xx}-q^{*}_{x}q_{xx}+q^{*}q_{xxx}-q^{*}_{xxx}q\right)\right]\\
+ik^{2}\left[1+12\delta|q|^{4}-4\delta|q_{x}|^{2}
+4\delta\left(q^{*}_{xx}q+q_{xx}q^{*}\right)\right]
-8\delta k^{3}\left(qq^{*}_{x}-q^{*}q_{x}\right)
-16i\delta k^{4}|q|^{2}+32i\delta k^{6},\\
T_{21}=32i\delta k^{5}q-16\delta k^{4}q_{x}
+ik^{3}\left(-16\delta|q|^{2}q-8\delta q_{xx}\right)
+ik^{2}\left(-24i\delta|q|^{2}q_{x}-4i\delta q_{xxx}\right)\\
ik\left[q+12\delta q^{*}q^{2}_{x}+16\delta|q|^{2}q_{xx}+4\delta q^{2}q^{*}_{xx}
+2\delta q_{xxxx}+12\delta|q|^{4}q+8\delta q|q_{x}|^{2}\right]\\
+i\left[\frac{i}{2}q_{x}+i\delta q_{xxxxx}+10i\delta\left(qq^{*}_{x}q_{xx}+qq^{*}_{xx}q_{x}
+|q|^{2}q_{xxx}+3|q|^{4}q_{x}+q_{x}|q_{x}|^{2}2q^{*}q_{x}q_{xx}\right)\right],\\
T_{12}=32i\delta k^{5}q^{*}-16\delta k^{4}q^{*}_{x}
+ik^{3}\left(-16\delta|q|^{2}q^{*}-8\delta q^{*}_{xx}\right)
+ik^{2}\left(24i\delta|q|^{2}q^{*}_{x}+4i\delta q^{*}_{xxx}\right)\\
ik\left[q^{*}+12\delta qq^{2}_{x}+16\delta|q|^{2}q^{*}_{xx}+4\delta q^{2}q_{xx}
+2\delta q^{*}_{xxxx}+12\delta|q|^{4}q^{*}+8\delta q^{*}|q_{x}|^{2}\right]\\
-i\left[\frac{i}{2}q^{*}_{x}+i\delta q^{*}_{xxxxx}+10i\delta\left(q^{*}q_{x}q^{*}_{xx}+q^{*}q_{xx}q^{*}_{x}
+|q|^{2}q^{*}_{xxx}+3|q|^{4}q^{*}_{x}+q^{*}_{x}|q_{x}|^{2}
+2qq^{*}_{x}q^{*}_{xx}\right)\right],\\
\end{gather*}
where  $k$ is the  spectral parameter. Furthermore, \eqref{Q1} satisfies zero curvature equation $U_{t}-V_{x}+[U,V]=0$ which is the compatibility condition of the Lax pair \eqref{2.1}.

In order to treat a non-zero boundary related to time $t$ into a non-zero boundary independent of time $t$, we make the following transformation
\begin{align}\label{Q2}
\left\{\begin{aligned}
&q\rightarrow qe^{i\left(q^{2}_{0}+20\delta q^{6}_{0}\right)t},\\
&\psi\rightarrow\psi e^{-i\left(\frac{1}{2}q^{2}_{0}+10\delta q^{6}_{0}\right)t\sigma_{3}}.
\end{aligned}\right.
\end{align}
Then \eqref{Q1} develops into
\begin{align}\label{Q3}
\begin{aligned}
&iq_{t}+\frac{1}{2}q_{xx}+q(|q|^{2}-q^{2}_{0})+20\delta q(|q|^{6}-q^{6}_{0})
+\delta q_{xxxxxx}+10\delta q^{3}\left[(q^{*}_{x})^{2}+2q^{*}q^{2}_{xx}\right]\\
&+\delta q\left[12q^{*}q_{xxxx}+8q_{x}q^{*}_{xxx}+22|q_{xx}|^{2}\right]
+\delta q\left[18q_{xxx}q^{*}_{x}+70(q^{*})^{2}q^{2}_{x}\right]+20\delta(q_{x})^{2}q^{*}_{xx}\\
&+10\delta q_{x}\left[5q_{xx}q^{*}_{x}+3q^{*}q_{xxx}\right]
+20\delta q^{*}q^{2}_{xx}+\delta\left[60q^{*}|q_{x}|^{2}+50(q^{*})^{2}q_{xx}+2q^{*}_{xxxx}\right]q^{2}=0,
\end{aligned}
\end{align}
and the Lax pair also becomes
\begin{align}\label{Q5}
\left\{\begin{aligned}
&\psi_{x}=X\psi, \\
&\psi_{t}=T\psi,
\end{aligned}\right.
\end{align}
here
\begin{align*}
&X=ik\sigma_{3}+Q,\\
&T=\left(\begin{array}{cc}
    T_{11}-i\left(\frac{1}{2}q^{2}_{0}+10\delta q^{6}_{0}\right) & T_{12} \\
    T_{21} & -T_{11}+i\left(\frac{1}{2}q^{2}_{0}+10\delta q^{6}_{0}\right)  \\
 \end{array}\right).
\end{align*}

The new boundary conditions is
\begin{align}\label{Q4}
\lim_{x\rightarrow\pm\infty}q(x,t)=q_{\pm}, \quad|q_{\pm}|=q_{0}\neq0,
\end{align}
so when $x\rightarrow\pm\infty$, we obtain the limit spectral problem
\begin{align}\label{Q6}
\left\{\begin{aligned}
&\psi_{x}=X_{\pm}\psi,\\
&\psi_{t}=T_{\pm}\psi,
\end{aligned}\right.
\end{align}
where
\begin{align*}
X_{\pm}=ik\sigma_{3}+Q_{\pm}, \quad Q_{\pm}=\left(\begin{array}{cc}
  0 & iq^{*}_{\pm} \\
  iq_{\pm} & 0 \\
 \end{array}\right),
\end{align*}
\begin{align*}
T_{\pm}=&32i\delta k^{6}\sigma_{3}+32\delta k^{5}Q_{\pm}+16i\delta k^{4}Q^{2}_{\pm}\sigma_{3}
+16\delta k^{3}Q^{3}_{\pm}+ik^{2}\sigma_{3}+12i\delta k^{2}Q^{4}_{\pm}\sigma_{3}\\
&+kQ_{\pm}+12i\delta kQ^{5}_{\pm}+\frac{1}{2}iQ^{2}_{\pm}\sigma_{3}+10i\delta Q^{6}_{\pm}\sigma_{3}-i\left(\frac{1}{2}q^{2}_{0}+10\delta q^{6}_{0}\right)\sigma_{3}\\
=&\left(32\delta k^{5}+16\delta k^{3}q^{2}_{0}+12\delta kq^{4}_{0}+k\right)X_{\pm}.
\end{align*}

From the calculation, we can know that $X_{\pm}$ has two eigenvalues $i\lambda$ and $-i\lambda$, where
\begin{align*}
\lambda^{2}=k^{2}+q^{2}_{0}=(k+iq_{0})(k-iq_{0}),
\end{align*}
the Riemann surface determined by this is formed by bonding complex $k$-planes $S_{1}$ and $S_{2}$ cut along secant $[-iq_{0}, iq_{0}]$, here the branch points are $k=\pm iq_{0}$. On $S_{1}$, introducing local polar coordinates
\begin{align}
k+iq_{0}=r_{1}e^{i\theta_{1}}, k-iq_{0}=r_{2}e^{i\theta_{2}},
\quad-\frac{\pi}{2}<\theta_{1}, \theta_{2}<\frac{3\pi}{2},
\end{align}
then the single-valued analytical branch function can be written on the Riemann surface as
\begin{align}
\lambda(k)=
\left\{\begin{aligned}
&(r_{1}r_{2})^{\frac{1}{2}}e^{\frac{\theta_{1}+\theta_{2}}{2}}, \qquad on ~ S_{1},\\
&-(r_{1}r_{2})^{\frac{1}{2}}e^{\frac{\theta_{1}+\theta_{2}}{2}}, \quad on ~ S_{2}.\\
\end{aligned}\right.
\end{align}
Meanwhile, it has the following properties

$\bullet$ $Imk>0$ of $S_{1}$ and $Imk<0$ of $S_{2}$ are mapped into $Im\lambda>0$;

$\bullet$ $Imk<0$ of $S_{1}$ and $Imk>0$ of $S_{2}$ are mapped into $Im\lambda<0$;

$\bullet$ The branch $[-iq_{0}, iq_{0}]$ of $S_{1}$ and $S_{2}$ is mapped into the branch $[-q_{0}, q_{0}]$ of $\lambda$-plane.

Defining the uniformization variable
\begin{align}
z=k+\lambda,
\end{align}
so we get the following relations
\begin{align}\label{Q11}
k(z)=\frac{1}{2}\left(z-\frac{q^{2}_{0}}{z}\right), \quad
\lambda(z)=\frac{1}{2}\left(z+\frac{q^{2}_{0}}{z}\right).
\end{align}
Considering the second equation of \eqref{Q11}, namely, Joukowsky transformation,
\begin{align*}
\lambda(z)=\frac{1}{2|z|^{2}}\left[(|z|^{2}-q^{2}_{0})z+2q^{2}_{0}Rez\right],
\end{align*}
one obtains
\begin{align*}
Im\lambda(z)=\frac{1}{2|z|^{2}}(|z|^{2}-q^{2}_{0})Imz.
\end{align*}
Therefore, Joukowsky transformation has the following properties:

$\bullet$ $Im\lambda>0$ is mapped into
\begin{align*}
D^{+}=\left\{z\in\mathbb{C}:(|z|^{2}-q^{2}_{0})Imz>0\right\};
\end{align*}

$\bullet$ $Im\lambda<0$ is mapped into
\begin{align*}
D^{-}=\left\{z\in\mathbb{C}:(|z|^{2}-q^{2}_{0})Imz<0\right\};
\end{align*}

$\bullet$ The branch $[-q_{0}, q_{0}]$ is mapped into the circle of $z$-plane
\begin{align*}
C_{0}=\left\{|z|=q_{0}, z\in\mathbb{C}\right\}.
\end{align*}

The transformation relationship from $k$-plane to $z$-plane is shown in the figure below.

\centerline{\begin{tikzpicture}
\path [fill=red] (-4.5,2.5) -- (-0.5,2.5) to
(-0.5,4.5) -- (-4.5,4.5);
\draw[-][thick](-4.5,2.5)--(-2.5,2.5);
\draw[fill] (-2.5,2.5) circle [radius=0.035];
\draw[->][thick](-2.5,2.5)--(-0.5,2.5)node[above]{$Rek$};
\draw[<-][thick](-2.5,4.5)node[right]{$Imk$}--(-2.5,3.5)node[right]{$iq_{0}$};
\draw[fill] (-2.5,3.5) circle [radius=0.035];
\draw[-][thick](-2.5,3.5)--(-2.5,2.5);
\draw[-][thick](-2.5,2.5)--(-2.5,1.5)node[right]{$-iq_{0}$};
\draw[fill] (-2.5,1.5) circle [radius=0.035];
\draw[-][thick](-2.5,1.5)--(-2.5,0.5);
\draw[fill] (-4.5,4) node[right]{$S_{1}$};
\draw[fill] (-2.5,2.2) node[right]{$0$};
\draw[fill] (-1.8,3.8) circle [radius=0.035] node[right]{$Im k>0$};
\draw[fill] (-1.8,1.7) circle [radius=0.035] node[right]{$Im k<0$};
\path [fill=red] (0.5,0.5) -- (4.5,0.5) to
(4.5,2.5) -- (0.5,2.5);
\draw[-][thick](0.5,2.5)--(2.5,2.5);
\draw[fill] (2.5,2.5) circle [radius=0.035];
\draw[->][thick](2.5,2.5)--(4.5,2.5)node[above]{$Rek$};
\draw[<-][thick](2.5,4.5)node[right]{$Imk$}--(2.5,3.5)node[right]{$iq_{0}$};
\draw[fill] (2.5,3.5) circle [radius=0.035];
\draw[-][thick](2.5,3.5)--(2.5,2.5);
\draw[-][thick](2.5,2.5)--(2.5,1.5)node[right]{$-iq_{0}$};
\draw[fill] (2.5,1.5) circle [radius=0.035];
\draw[-][thick](2.5,1.5)--(2.5,0.5);
\draw[fill] (0.5,4) node[right]{$S_{2}$};
\draw[fill] (-2.5,2.2) node[right]{$0$};
\draw[fill] (3.1,3.8) circle [radius=0.035] node[right]{$Im k>0$};
\draw[fill] (3.1,1.7) circle [radius=0.035] node[right]{$Im k<0$};
\path [fill=red] (-4.5,-2.5) -- (-0.5,-2.5) to
(-0.5,-0.5) -- (-4.5,-0.5);
\draw[-][thick](-4.5,-2.5)--(-2.5,-2.5);
\draw[fill] (-2.5,-2.5) circle [radius=0.035];
\draw[->][thick](-2.5,-2.5)--(-0.5,-2.5)node[above]{$Rek$};
\draw[<-][thick](-2.5,-0.5)node[right]{$Imk$}--(-2.5,-1.5)node[right]{$iq_{0}$};
\draw[fill] (-2.5,-1.5) circle [radius=0.035];
\draw[-][thick](-2.5,-1.5)--(-2.5,-2.5);
\draw[-][thick](-2.5,-2.5)--(-2.5,-3.5)node[right]{$-iq_{0}$};
\draw[fill] (-2.5,-3.5) circle [radius=0.035];
\draw[-][thick](-2.5,-3.5)--(-2.5,-4.5);
\draw[fill] (-4.5,-1) node[right]{$\lambda-$plane};
\draw[fill] (-2.5,-2.7) node[right]{$0$};
\draw[fill] (-1.9,-1.7) circle [radius=0.035] node[right]{$Im \lambda>0$};
\draw[fill] (-1.9,-3.8) circle [radius=0.035] node[right]{$Im \lambda<0$};
\path [fill=red] (0.5,-2.5) -- (4.5,-2.5) to
(4.5,-0.5) -- (0.5,-0.5);
\filldraw[white, line width=0.5](3.5,-2.5) arc (0:180:1);
\filldraw[red, line width=0.5](1.5,-2.5) arc (-180:0:1);
\draw[->][thick](0.5,-2.5)--(1,-2.5);
\draw[-][thick](1,-2.5)--(2,-2.5);
\draw[<-][thick](2,-2.5)--(2.5,-2.5);
\draw[fill] (2.5,-2.5) circle [radius=0.035];
\draw[-][thick](2.5,-2.5)--(3,-2.5);
\draw[<->][thick](3,-2.5)--(4,-2.5);
\draw[-][thick](4,-2.5)--(4.5,-2.5)node[above]{$Rez$};
\draw[-][thick](2.5,-0.5)node[right]{$Imz$}--(2.5,-2.5);
\draw[-][thick](2.5,-2.5)--(2.5,-4.5);
\draw[fill] (2.5,-2.8) node[right]{$0$};
\draw[fill] (2.5,-1.5) circle [radius=0.035];
\draw[fill] (2.5,-3.5) circle [radius=0.035];
\draw[fill] (2.5,-1.2) node[right]{$iq_{0}$};
\draw[fill] (2.5,-3.8) node[right]{$-iq_{0}$};
\draw[-][thick](3.5,-2.5) arc(0:360:1);
\draw[-<][thick](3.5,-2.5) arc(0:30:1);
\draw[-<][thick](3.5,-2.5) arc(0:150:1);
\draw[->][thick](3.5,-2.5) arc(0:210:1);
\draw[->][thick](3.5,-2.5) arc(0:330:1);
\draw[->][thick](-1.5,1.5)--(-1.5,-0.3);
\draw[->][thick](1.5,1.5)--(-1,-0.4);
\draw[->][thick](-0.8,- 3.2)--(0.8,-3.2);
\end{tikzpicture}}
\noindent { \small \textbf{Figure 1.}
Transformation relationship from $k$-Riemann plane to $\lambda$-plane and $z$-plane.
}\\

In addition, for $k\rightarrow\infty$, $z$ has two asymptotic states:
\begin{align*}
&on ~S_{1}, k\rightarrow\infty \Rightarrow z\rightarrow\infty;\\
&on ~S_{2}, k\rightarrow\infty \Rightarrow z\rightarrow0.
\end{align*}

\subsection{Jost functions}
In this subsection, some results of Jost functions are obtained.

Through the previous related calculations, we know that $X_{\pm}$ has two eigenvalues $\pm i\lambda$ and $T_{\pm}$ also has two eigenvalues $\pm i\lambda(32\delta k^{5}+16\delta k^{3}q^{3}_{0}+12\delta kq^{4}_{0}+k)$. According to the relation of $X_{\pm}$ and $T_{\pm}$, it is obviously that they can be diagonalized by the same matrix, i.e.,
\begin{align}\label{Q14}
\left\{\begin{aligned}
&X_{\pm}(x,t;z)=Y_{\pm}(z)(i\lambda\sigma_{3})Y^{-1}_{\pm}(z),\\
&T_{\pm}(x,t;z)=Y_{\pm}(z)(i\lambda(32\delta k^{5}+16\delta k^{3}q^{3}_{0}+12\delta kq^{4}_{0}+k)\sigma_{3})Y^{-1}_{\pm}(z),
\end{aligned}\right.
\end{align}
where
\begin{align*}
Y_{\pm}(z)=\left(
             \begin{array}{cc}
               1 & -\frac{q^{*}_{\pm}}{z} \\
               \frac{q_{\pm}}{z} & 1 \\
             \end{array}
           \right)
=\mathbb{I}+\frac{i}{z}\sigma_{3}Q_{\pm}.
\end{align*}

Substituting \eqref{Q14} into \eqref{Q6} gets
\begin{align}
\left\{\begin{aligned}
&(Y^{-1}_{\pm}\psi)_{x}=i\lambda\sigma_{3}Y^{-1}_{\pm}\psi,\\
&(Y^{-1}_{\pm}\psi)_{t}=i\lambda(32\delta k^{5}+16\delta k^{3}q^{3}_{0}+12\delta kq^{4}_{0}+k)\sigma_{3}Y^{-1}_{\pm}\psi.
\end{aligned}\right.
\end{align}
Then, the solution of asymptotic spectral problem \eqref{Q6} is
\begin{align}
\psi=Y_{\pm}e^{i\theta(z)\sigma_{3}},
\end{align}
where $\theta(z)=\lambda(z)\left[x+(32\delta k^{5}+16\delta k^{3}q^{3}_{0}+12\delta kq^{4}_{0}+k)t\right]$.

Therefore, the Lax pair \eqref{Q5} has a solution asymptotically to $\psi$
\begin{align}\label{Q17}
\psi_{\pm}\sim Y_{\pm}e^{i\theta(z)\sigma_{3}}, \quad x\rightarrow\pm\infty.
\end{align}
For \eqref{Q5}, making the following transformation
\begin{align}\label{Q18}
\mu_{\pm}=\psi_{\pm}e^{-i\theta(z)\sigma_{3}},
\end{align}
 we obtain
\begin{align}\label{Q19}
\mu_{\pm}\sim Y_{\pm}, \quad x\rightarrow\pm\infty.
\end{align}
The equivalent Lax pair of \eqref{Q5} is
\begin{align}\label{Q20}
\left\{\begin{aligned}
&(Y^{-1}_{\pm}\mu_{\pm})_{x}+i\lambda\left[Y^{-1}_{\pm}\mu_{\pm}, \sigma_{3}\right]
=Y^{-1}_{\pm}\triangle Q_{\pm}\mu_{\pm},\\
&(Y^{-1}_{\pm}\mu_{\pm})_{t}+i\lambda(32\delta k^{5}+16\delta k^{3}q^{3}_{0}+12\delta kq^{4}_{0}+k)\left[Y^{-1}_{\pm}\mu_{\pm}, \sigma_{3}\right]
=Y^{-1}_{\pm}\triangle T_{\pm}\mu_{\pm},
\end{aligned}\right.
\end{align}
with $\triangle Q_{\pm}=Q-Q_{\pm}$, $\triangle T_{\pm}=T-T_{\pm}$.

Furthermore, the full derivative form of the equivalent Lax pair \eqref{Q20} reads
\begin{align}
d\left(e^{i\theta(z)\hat{\sigma}_{3}}Y^{-1}_{\pm}\mu_{\pm}\right)
=e^{i\theta(z)\hat{\sigma}_{3}}\left[Y^{-1}_{\pm}\left(\triangle Q_{\pm}dx-\triangle T_{\pm}dt\right)\mu_{\pm}\right],
\end{align}
where $e^{\hat{\sigma}_{3}}A=e^{\sigma_{3}}Ae^{-\sigma_{3}}$.
By integrating along two special paths $(-\infty,t)\rightarrow(x,t)$ and $(+\infty,t)\rightarrow(x,t)$, the following two Volterra integral equations can be derived as
\begin{align}\label{Q22}
\left\{\begin{aligned}
&\mu_{-}(x,t;z)=Y_{-}+\int^{x}_{-\infty}Y_{-}e^{i\lambda(x-y)\hat{\sigma}_{3}}
\left[Y^{-1}_{-}\triangle Q_{-}(y,t;z)\mu_{-}(y,t;z)\right]dy,\\
&\mu_{+}(x,t;z)=Y_{+}-\int^{+\infty}_{x}Y_{+}e^{i\lambda(x-y)\hat{\sigma}_{3}}
\left[Y^{-1}_{+}\triangle Q_{+}(y,t;z)\mu_{+}(y,t;z)\right]dy.
\end{aligned}\right.
\end{align}
If $q-q_{\pm}\in L^{1}(\mathbb{R})$, then from the first of the Volterra integral equations, we get
\begin{align}\label{Q23}
Y^{-1}_{-}\mu_{-,1}=\left(\begin{array}{c}
                                   1 \\
                                   0
                                 \end{array}\right)
+\int^{x}_{-\infty}G(x-y,z)\triangle Q_{-}(y)\mu_{-,1}dy,
\end{align}
here $x-y>0$ and
\begin{align*}
G(x-y,z)=\frac{1}{\gamma}\left(
                           \begin{array}{cc}
                             1 & \frac{q^{*}_{-}}{z} \\
                             -\frac{q_{-}}{z}e^{-2i\lambda(x-y)} & e^{-2i\lambda(x-y)} \\
                           \end{array}
                         \right),
\end{align*}
with $\gamma=det(Y_{\pm})=1+\frac{q^{2}_{0}}{z^{2}}$.
Noticing
\begin{align*}
e^{-2i\lambda(x-y)}=e^{-2i(x-y)Re\lambda}e^{2(x-y)Im\lambda},
\end{align*}
it is obviously that $\mu_{-,1}$ is analytic in $D^{-}$. In the same way, we obtain $\mu_{+,2}$ is also analytic in $D^{-}$, while $\mu_{+,1}$ and $\mu_{-,2}$ are analytic in $D^{+}$. Furthermore, they can be recorded as
\begin{align*}
&\mu_{+}=\left(\mu^{+}_{+,1}, \mu^{-}_{+,2}\right),\\
&\mu_{-}=\left(\mu^{-}_{-,1}, \mu^{+}_{-,2}\right).
\end{align*}

\subsection{Scattering matrix}

In this section, some results of scattering matrix are obtained.

Before the beginning of this section, we first introduce Abel's theorem.
\begin{thm}
Assuming $A(x)\in\mathbb{C}^{n\times n}$, for
\begin{align*}
Y_{x}=A(x)Y,
\end{align*}
then
\begin{align*}
(detY)_{x}=trAdetY,
\end{align*}
therefore
\begin{align*}
detY(x)=detY(x_{0})e^{\int^{x}_{x_{0}}trA(t)dt}.
\end{align*}
\end{thm}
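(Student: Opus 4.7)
The plan is to establish the first identity $(\det Y)_x = \mathrm{tr}(A)\,\det Y$ by a direct differentiation argument, and then integrate the resulting scalar linear ODE to obtain the exponential formula. There is no real obstacle here; this is the classical Liouville/Jacobi formula, and the paper presumably includes it because the Lax matrices $X$ and $T$ for the sNLS equation are traceless, so it will immediately imply that the Wronskian $\det \psi$ is constant in $x$ and $t$, which is a key ingredient for defining the scattering matrix in the next subsection.

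First I would write, using the differential equation $Y_x=A(x)Y$ and Taylor expansion in $h$,
\begin{equation*}
Y(x+h) \;=\; Y(x) + h\,A(x)Y(x) + O(h^2) \;=\; \bigl(\mathbb{I} + h\,A(x) + O(h^2)\bigr)\,Y(x),
\end{equation*}
so that multiplicativity of the determinant gives
\begin{equation*}
\det Y(x+h) \;=\; \det\bigl(\mathbb{I} + h\,A(x) + O(h^2)\bigr)\,\det Y(x).
\end{equation*}
The key elementary identity is the expansion $\det(\mathbb{I}+\varepsilon B)=1+\varepsilon\,\mathrm{tr}(B)+O(\varepsilon^{2})$, which follows instantly from the Leibniz permutation formula for the determinant: in $\det(\mathbb{I}+\varepsilon B)=\sum_{\sigma\in S_n}\mathrm{sgn}(\sigma)\prod_i(\delta_{\sigma(i),i}+\varepsilon B_{\sigma(i),i})$, only the identity permutation contributes to the linear term, producing exactly $\sum_i B_{ii}=\mathrm{tr}(B)$. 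Applying this with $B=A(x)$ and collecting terms of order $h$ yields
\begin{equation*}
\det Y(x+h) \;=\; \det Y(x)\,\bigl(1 + h\,\mathrm{tr}\,A(x) + O(h^2)\bigr),
\end{equation*}
which upon taking $h\to 0$ gives the first claim $(\det Y)_x=\mathrm{tr}(A)\,\det Y$.

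Finally, I would view this as a scalar linear first-order ODE in the unknown $\det Y(x)$ with coefficient $\mathrm{tr}\,A(x)$, and integrate by separation of variables (or, equivalently, multiply by the integrating factor $\exp(-\int_{x_0}^{x}\mathrm{tr}\,A(t)\,dt)$) to conclude
\begin{equation*}
\det Y(x) \;=\; \det Y(x_0)\,\exp\!\left(\int_{x_0}^{x}\mathrm{tr}\,A(t)\,dt\right),
\end{equation*}
which is valid whether or not $\det Y(x_0)$ vanishes. As an aside, an equivalent derivation of the first step uses multilinearity of $\det$ column by column and the alternating property to kill all off-diagonal contributions of $A$ acting on the columns of $Y$, but the factorization argument above avoids having to invert $Y$ and is cleaner.
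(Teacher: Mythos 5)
Your proof is correct: the factorization $Y(x+h)=\bigl(\mathbb{I}+h\,A(x)+O(h^{2})\bigr)Y(x)$ together with the expansion $\det(\mathbb{I}+\varepsilon B)=1+\varepsilon\,\mathrm{tr}(B)+O(\varepsilon^{2})$ gives the Liouville identity, and the integrating-factor integration (which, as you note, is the version valid even where $\det Y$ vanishes, unlike naive separation of variables) yields the exponential formula. The paper states this Abel/Liouville theorem without any proof, using it only to conclude that $\det\psi_{\pm}$ is constant because $\mathrm{tr}\,X=\mathrm{tr}\,T=0$, so there is no in-paper argument to compare against; your derivation is the standard one and correctly supplies the missing proof.
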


Obviously, from the expressions of $X$ and $T$ in the Lax pair, we know
\begin{align}
trX=trT=0.
\end{align}
Again according to Abel's theorem, it can be deduced that
\begin{align}
\left(det \psi_{\pm}\right)_{x}=\left(det \psi_{\pm}\right)_{t}=0.
\end{align}
Thus, through \eqref{Q17} we know
\begin{align}
det\psi_{\pm}=detY_{\pm}=\gamma.
\end{align}

Due to $\psi_{\pm}$ are the solution of Lax pair \eqref{Q5}, they have a linear relationship each other, there exists $S(z)$ satisfying
\begin{align}\label{Q28}
\psi_{+}(x,t;z)=\psi_{-}(x,t;z)S(z),
\end{align}
where $S(z)=(s_{ij})_{2\times 2}$ does not depend on $x$ and $t$.

Furthermore, it can be written in component form as
\begin{align}
&\psi_{+,1}=s_{11}\psi_{-,1}+s_{21}\psi_{-,2},\label{Q29a}\\
&\psi_{+,2}=s_{12}\psi_{-,1}+s_{22}\psi_{-,2}.\label{Q29b}
\end{align}

Defining Wronskian as $W(\phi,\psi)=\phi_{1}\psi_{2}-\phi_{2}\psi_{1}$, it is obviously that Wronskian satisfies the following properties
\begin{align*}
&W(\phi,\psi)=-W(\psi,\phi),\\
&W(c_{1}\phi,c_{2}\psi)=c_{1}c_{2}W(\phi,\psi).
\end{align*}
By the component form of \eqref{Q28}, the follow results can be obtained by
\begin{align}\label{Q30}
\begin{split}
&s_{11}=W(\psi_{+,1},\psi_{-,2})/\gamma, \quad s_{12}=W(\psi_{+,2},\psi_{-,2})/\gamma,\\
&s_{21}=W(\psi_{-,1},\psi_{+,1})/\gamma, \quad s_{22}=W(\psi_{-,1},\psi_{+,2})/\gamma.
\end{split}
\end{align}
Combining \eqref{Q18} and \eqref{Q28} deduces
\begin{align}\label{Q31}
\mu_{+}=\mu_{-}e^{i\theta\hat{\sigma}_{3}}S(z).
\end{align}
According to $det(\mu_{\pm})_{x}=det(\mu_{\pm})_{t}=0$, then
\begin{align}
det(\mu_{\pm})=\lim_{x\rightarrow\pm\infty}det(\mu_{\pm})=det(Y_{\pm})=\gamma\neq0.
\end{align}
Thus, $\mu_{\pm}$ is reversible. Direct calculation shows that $S(z)$ can be represented by $\mu_{\pm}$. Assuming $q-q_{\pm}\in L^{1}(\mathbb{R})$, according to the analyticity of $\mu_{\pm}$, we know that $s_{11}$ is analytic on $D^{+}$, $s_{22}$ is analytic on $D^{-}$, and $s_{12}$ and $s_{21}$ are continue to $\Sigma$.

In addition, we define the reflection coefficients as
\begin{align}
\rho(z)=\frac{s_{21}(z)}{s_{11}(z)}, \quad \tilde{\rho}(z)=\frac{s_{12}(z)}{s_{22}(z)}.
\end{align}

\subsection{Symmetris}

According to the first of the equivalent lax pair \eqref{Q20}, one obtains
\begin{align}\label{Q34}
\left[Y^{-1}_{\pm}(z)\mu_{\pm}(z)\right]_{x}+i\lambda(z)
\left[Y^{-1}_{\pm}(z)\mu_{\pm}(z)\sigma_{3}-\sigma_{3}Y^{-1}_{\pm}(z)\mu_{\pm}(z)\right]
=Y^{-1}_{\pm}(z)\triangle Q_{\pm}(z)\mu_{\pm}(z).
\end{align}
For the above equation, we replace $z$ with $z^{*}$ and take the conjugate, and multiply $\sigma$ on both sides. Then By simple calculation one can obtain
\begin{align}
\begin{aligned}
&\left[Y^{-1}_{\pm}(z)\sigma\mu^{*}_{\pm}(z^{*})\sigma\right]_{x}+i\lambda(z)
\left[Y^{-1}_{\pm}(z)\sigma\mu^{*}_{\pm}(z^{*})\sigma\sigma_{3}
-\sigma_{3}Y^{-1}_{\pm}(z)\sigma\mu^{*}_{\pm}(z^{*})\sigma\right]\\
&=Y^{-1}_{\pm}(z)\triangle Q_{\pm}(z)\sigma\mu^{*}_{\pm}(z^{*})\sigma,
\end{aligned}
\end{align}
here $\sigma=\left(\begin{array}{cc}
               0 & 1 \\
               -1 & 0
             \end{array}\right)
$.
Due to \eqref{Q19}, it is easy to see that
\begin{align}
-\sigma\mu^{*}_{\pm}(z^{*})\sigma\sim Y_{\pm}, \quad x\rightarrow\pm\infty.
\end{align}
Thus, $\mu_{\pm}(z)$ and $-\mu^{*}_{\pm}(z^{*})$ satisfy the same equation and have the same asymptotic behavior, i.e., they are equal. From this, we get the symmetry relation
\begin{align}
\mu_{\pm}(z)=-\sigma\mu^{*}_{\pm}(z^{*})\sigma.
\end{align}
Similarly, the following symmetry relation can be obtained as
\begin{align}
\mu_{\pm}(z)=\frac{i}{z}\mu_{\pm}\left(-\frac{q^{2}_{0}}{z}\right)\sigma_{3}Q_{\pm}.
\end{align}
Expanding the symmetry relationship by column yields
\begin{gather}\label{Q45}
\begin{gathered}
\mu_{\pm,1}(z)=\sigma\mu^{*}_{\pm,2}(z^{*}),\quad
\mu_{\pm,2}(z)=-\sigma\mu^{*}_{\pm,1}(z^{*}),\\
\mu_{\pm,1}(z)=\frac{q_{\pm}}{z}\mu_{\pm,2}\left(-\frac{q^{2}_{0}}{z}\right), \quad
\mu_{\pm,2}(z)=-\frac{q^{*}_{\pm}}{z}\mu_{\pm,1}\left(-\frac{q^{2}_{0}}{z}\right).
\end{gathered}
\end{gather}

Because of the symmetry of $\mu_{\pm}(z)$ and the relationship between $S(z)$ and $\mu_{\pm}(z)$, we get
\begin{align}
-\sigma S^{*}(z^{*})\sigma=S(z).
\end{align}
Applying the same way, the following relation also can be obtained as
\begin{align}
S(z)=\left(\sigma_{3}Q_{-}\right)^{-1}S\left(-q^{2}_{0}/z\right)\sigma_{3}Q_{+}.
\end{align}
By column, we have
\begin{gather*}
s_{22}(z)=s^{*}_{11}(z^{*}), \quad
s_{12}(z)=-s^{*}_{21}(z^{*}),\\
s_{11}(z)=\frac{q_{+}}{q_{-}}s_{22}\left(-\frac{q^{2}_{0}}{z}\right), \quad
s_{12}(z)=\frac{q^{*}_{+}}{q_{-}}s_{21}\left(-\frac{q^{2}_{0}}{z}\right).
\end{gather*}

Furthermore, the reflection coefficient has the following symmetry
\begin{align}
\rho(z)=-\tilde{\rho}^{*}(z^{*})=
-\frac{q_{-}}{q^{*}_{-}}\tilde{\rho}\left(-\frac{q^{2}_{0}}{z}\right).
\end{align}

\subsection{Asymptotic behavior of $\mu_{\pm}$ and scattering matrix}

In this subsection, we give the asymptotic properties of $\mu_{\pm}(x,t;z)$ and $S(z)$ as follows.

\begin{prop}
The asymptotic properties of $\mu_{\pm}(x,t;z)$ and $S(z)$ are
\begin{align}
\mu_{\pm}(x,t;z)=
\left\{\begin{aligned}
&\mathbb{I}+\frac{i}{z}\sigma_{3}Q+O\left(\frac{1}{z^{2}}\right), \quad z\rightarrow\infty,\\
&\frac{i}{z}\sigma_{3}Q_{\pm}+O(1), \quad z\rightarrow0,
\end{aligned}\right.
\end{align}
and
\begin{align}
S(z)=
\left\{\begin{aligned}
&\mathbb{I}+O\left(\frac{1}{z}\right), \quad z\rightarrow\infty,\\
&diag\left(\frac{q_{+}}{q_{-}},\frac{q_{-}}{q_{+}}\right)+O(z), \quad z\rightarrow0.
\end{aligned}\right.
\end{align}
\end{prop}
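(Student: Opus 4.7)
The plan is to extract the two asymptotic expansions of $\mu_\pm$ directly from the $x$-part of the Lax pair, then transfer this information to $S(z)$ via the Wronskian representation already obtained in \eqref{Q30}. Rewriting $\psi_x=X\psi$ in terms of $\mu_\pm=\psi_\pm e^{-i\theta\sigma_3}$ yields
\begin{equation*}
(\mu_\pm)_x = ik\,\sigma_3\mu_\pm + Q\mu_\pm - i\lambda\,\mu_\pm\sigma_3 = \tfrac{iz}{2}[\sigma_3,\mu_\pm] - \tfrac{iq_0^2}{2z}\{\sigma_3,\mu_\pm\} + Q\mu_\pm,
\end{equation*}
after inserting $k(z)=\tfrac12(z-q_0^2/z)$ and $\lambda(z)=\tfrac12(z+q_0^2/z)$. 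This is the form I would use in both limits.

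For $z\to\infty$, I would substitute the formal series $\mu_\pm = \sum_{n\ge 0}\mu_\pm^{(n)}z^{-n}$ and match powers. The coefficient of $z^1$ forces $[\sigma_3,\mu_\pm^{(0)}]=0$, so $\mu_\pm^{(0)}$ is diagonal; the boundary data $\mu_\pm\to Y_\pm=\mathbb{I}+\frac{i}{z}\sigma_3Q_\pm\to\mathbb{I}$ pins down $\mu_\pm^{(0)}=\mathbb{I}$. The $z^0$ coefficient then gives $[\sigma_3,\mu_\pm^{(1)}]=2iQ$, which for the off-diagonal part (the only part the commutator sees) yields $\mu_\pm^{(1)}=i\sigma_3 Q$, establishing the first line of the proposition. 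Rigorous justification of the remainder $O(1/z^2)$ follows by feeding this ansatz back into the Volterra integral equations \eqref{Q22}: for $z$ large in $D^\pm$, $\mathrm{Im}\,\lambda$ has the correct sign for $e^{-2i\lambda(x-y)}$ to be bounded, and integration by parts in the Volterra kernel produces the stated error bound. For $z\to 0$, rather than re-expanding, I would exploit the second symmetry in \eqref{Q45}: since $\mu_{\pm,1}(z)=(q_\pm/z)\mu_{\pm,2}(-q_0^2/z)$ and $\mu_{\pm,2}(z)=-(q_\pm^*/z)\mu_{\pm,1}(-q_0^2/z)$, the point $-q_0^2/z$ tends to $\infty$ and the just-proved large-$z$ expansion applies, giving
\begin{equation*}
\mu_\pm(z) = \tfrac{i}{z}\sigma_3 Q_\pm + O(1), \qquad z\to 0,
\end{equation*}
after a short computation using $\sigma_3 Q\sigma_3=-Q$.

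For $S(z)$, I would work from the Wronskian formulas \eqref{Q30}, noting first that the oscillating exponentials cancel when the columns of $\psi_\pm=\mu_\pm e^{i\theta\sigma_3}$ are paired as in \eqref{Q30}, so $s_{11}=W(\mu_{+,1},\mu_{-,2})/\gamma$ and analogously for the other entries. With $\gamma=1+q_0^2/z^2$, inserting the $z\to\infty$ expansions $\mu_{+,1}\to(1,0)^T$, $\mu_{-,2}\to(0,1)^T$ gives $s_{11}=1+O(1/z)$, and the degeneracies $\mu_{\pm,2}\to(0,1)^T$, $\mu_{\pm,1}\to(1,0)^T$ cause the Wronskians defining $s_{12}$ and $s_{21}$ to vanish at leading order, hence $S(z)=\mathbb{I}+O(1/z)$. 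For $z\to 0$, the same formulas with the second expansion of $\mu_\pm$ give $\mu_{+,1}\sim(0,q_+/z)^T$ and $\mu_{-,2}\sim(-q_-^*/z,0)^T$, so $W(\mu_{+,1},\mu_{-,2})\sim q_+q_-^*/z^2$ while $\gamma\sim q_0^2/z^2$, producing $s_{11}\to q_+q_-^*/q_0^2=q_+/q_-$; the analogous computation yields $s_{22}\to q_-/q_+$, and $s_{12},s_{21}=O(z)$ because the relevant leading-order Wronskians vanish identically. The main obstacle I anticipate is not the algebra but the rigorous control of the $O$-remainders: uniformity of the Neumann series for the Volterra equations \eqref{Q22} as $z\to 0$ in $D^\pm$ requires care because $\lambda(z)$ blows up, and the cleanest way out is to apply the symmetry reduction \eqref{Q45} to transplant the problem to $z\to\infty$, where uniform bounds are standard under $q-q_\pm\in L^1(\mathbb{R})$.
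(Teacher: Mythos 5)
Your proposal is correct, and for the $z\to\infty$ part it follows essentially the same route as the paper: a formal power-series ansatz inserted into the $x$-part of the Lax pair for $\mu_{\pm}$ (you use the bare equation $(\mu_{\pm})_{x}=\tfrac{iz}{2}[\sigma_{3},\mu_{\pm}]-\tfrac{iq_{0}^{2}}{2z}\{\sigma_{3},\mu_{\pm}\}+Q\mu_{\pm}$, the paper the $Y_{\pm}$-conjugated form \eqref{Q20}), with coefficients fixed by order-by-order matching and the boundary data. Where you genuinely diverge is the $z\to 0$ limit: the paper simply redoes the expansion ``in the same way'' near $z=0$, whereas you transplant the already-proved large-$z$ expansion through the second symmetry in \eqref{Q45}; this is cleaner and, as you note, sidesteps the non-uniformity of the Volterra/Neumann series \eqref{Q22} caused by the blow-up of $\lambda(z)$ at $z=0$. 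Your treatment of $S(z)$ via the Wronskian formulas \eqref{Q30} makes explicit what the paper only asserts, and the computation $s_{11}\to q_{+}q_{-}^{*}/q_{0}^{2}=q_{+}/q_{-}$ is exactly right. Two small cautions: the exponential factors cancel only in $s_{11}$ and $s_{22}$; for $s_{12}=W(\mu_{+,2},\mu_{-,2})e^{-2i\theta}/\gamma$ and $s_{21}$ they survive, but since these entries are only needed on $\Sigma$, where $|e^{\pm 2i\theta}|=1$, your order estimates $O(1/z)$ and $O(z)$ still stand. Also, as you yourself observe, the order-$z^{0}$ commutator relation determines only the off-diagonal part of $\mu^{(1)}$; the identification $\mu^{(1)}=i\sigma_{3}Q$ (which the paper also makes without comment) should be read as a statement about the off-diagonal block, which is all that enters the reconstruction of $q$.
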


\begin{proof}
When $z=\infty$, the asymptotic expansion of $\mu_{\pm}$ is
\begin{align}
\mu_{\pm}=\mu^{0}_{\pm}+\frac{\mu^{1}_{\pm}}{z}+\cdots.
\end{align}
Combining
\begin{align}
Y^{-1}_{\pm}=\frac{1}{\gamma}\left(\mathbb{I}-\frac{i}{z}Q_{\pm}\sigma_{3}\right),
\end{align}
we reconsider the equivalent Lax pair and compare the coefficients of the power term of $z$ to obtain
\begin{align*}
\mu^{0}_{\pm}=\mathbb{I},\quad \mu^{1}_{\pm}=i\sigma_{3}Q.
\end{align*}
Thus, we have
\begin{align}
\mu_{\pm}(x,t;z)=
&\mathbb{I}+\frac{i}{z}\sigma_{3}Q+O\left(\frac{1}{z^{2}}\right), \quad z\rightarrow\infty.
\end{align}
In the same way, we also can obtain the asymptotic properties of $\mu_{\pm}(x,t;z)$ when $z\rightarrow0$.

According to the asymptotic properties and relationship with the scattering matrix of $\mu_{\pm}(x,t;z)$, the asymptotic properties of $S(z)$ can be derived.
\end{proof}

\section{Generalized Riemann-Hilbert problem}

In this section, we will construct a generalized Riemann-Hilbert problem based on the previous results.

According to \eqref{Q31}, one have
\begin{align}
&\mu_{+,1}=s_{11}\mu_{-,1}+s_{21}e^{-2i\theta}\mu_{-,2},\\
&\mu_{+,2}=s_{12}e^{2i\theta}\mu_{-,1}+s_{22}\mu_{-,2}.
\end{align}

According to the analysis of the Jost function $\mu_{\pm}$ and the scattering matrix, defining a sectionally meromorphic matrix as
\begin{align}
M(x,t;z)=\left\{\begin{aligned}
M^{+}(x,t;z)=\left(\frac{\mu_{+,1}(x,t;z)}{s_{11}(z)},\mu_{-,2}(x,t;z)\right),\\
M^{-}(x,t;z)=\left(\mu_{-,1}(x,t;z),\frac{\mu_{+,2}(x,t;z)}{s_{22}(z)}\right).
\end{aligned}\right.
\end{align}
Thus, by the asymptotic of $\mu_{\pm}$ and the scattering coefficient, we know
\begin{align}
&M^{\pm}(x,t;z)\sim\mathbb{I}+O\left(\frac{1}{z}\right), \quad z\rightarrow\infty,\\
&M^{\pm}(x,t;z)\sim\frac{i}{z}\sigma_{3}Q_{-}+O(1), \quad z\rightarrow0.
\end{align}

So far, the following theorem can be concluded.
\begin{defn}
The generalized Riemann-Hilbert problem is constructed as\\
$\bullet$ $M(x,t;z)$ is meromorphic in $\mathbb{C}\backslash\Sigma$;\\
$\bullet$ The jump condition is
\begin{align}\label{3.5}
M^{-}(x,t;z)=M^{+}(x,t;z)\left(\mathbb{I}-G(x,t;z)\right),
\end{align}
 where
\begin{align*}
G(x,t;z)=\left(\begin{array}{cc}
           0 & -\tilde{\rho}(z)e^{2i\theta(x,t;z)} \\
           \rho(z)e^{2i\theta(x,t;z)} & \rho(z)\tilde{\rho}
         \end{array}\right);
\end{align*}
$\bullet$ $M^{\pm}(x,t;z)\sim\mathbb{I}+O\left(\frac{1}{z}\right), \quad z\rightarrow\infty$;\\
$\bullet$ $M^{\pm}(x,t;z)\sim\frac{i}{z}\sigma_{3}Q_{-}+O(1), \quad z\rightarrow0$.
\end{defn}

\section{The inverse scattering transform with the simple poles}

In this section, we will discuss and solve the soliton solution under the simple poles.

\subsection{Discrete spectrum and residue condition}

The discrete spectrum of the scattering problem is composed of all values $z\in\mathbb{C}\backslash\Sigma$ satisfying eigenfunctions in $L^{2}(\mathbb{R})$. These values make $s_{11}(z)=0$ for $z\in D^{+}$ and $s_{22}(z)=0$ for $z\in D^{-}$ respectively.
Assuming $z_{n}\in D^{+}\bigcap\left\{z\in\mathbb{C}:Imz>0\right\}$ is the simple poles of $s_{11}(z)$, then $s_{11}(z)=0$ but $s_{11}(z)\neq0$ for $n=1,2,\cdots,N$.

From the symmetry of the scattering data, one obtains
\begin{align}\label{4.1}
s_{11}(z_{n})=0\Leftrightarrow s_{22}(z^{*}_{n})=0\Leftrightarrow s_{22}(-q^{2}_{0}/z_{n})=0
\Leftrightarrow s_{11}(-q^{2}_{0}/z^{*}_{n})=0,
\end{align}
which yields a quartet of discrete eigenvalues, namely,
\begin{align}\label{4.2}
\mathbb{Z}=\left\{z_{n}, z^{*}_{n}, -q^{2}_{0}/z_{n}, -q^{2}_{0}/z^{*}_{n}\right\}, \quad n=1,2,\cdots,N.
\end{align}

According to \eqref{Q30}, When $s_{11}(z_{n})=0$, there is a constant $b_{n}$ and $\tilde{b}_{n}$ that do not depend on $x$, $t$ and $z$ such that
\addtocounter {equation}{1}
\begin{align}
&\psi_{+,1}(x,t;z_{n})=b_{n}\psi_{-,2}(x,t;z_{n}),\tag{\theequation a}\label{4.3a}\\
&\psi_{+,2}(x,t;z^{*}_{n})=\tilde{b}_{n}\psi_{-,1}(x,t;z^{*}_{n}).\tag{\theequation b}\label{4.3b}
\end{align}

The transformation \eqref{Q18} yields
\begin{align}\label{4.4}
\left\{\begin{aligned}
&\psi_{\pm,1}=\mu_{\pm,1}e^{i\theta(z)},\\
&\psi_{\pm,2}=\mu_{\pm,2}e^{-i\theta(z)},
\end{aligned}\right.
\end{align}
then \eqref{4.3a} and \eqref{4.3b} derive
\begin{align}
&\mu_{+,1}(z_{n})=b_{n}e^{-2i\theta(z_{n})}\mu_{-,2}(z_{n}),\label{4.5a}\\
&\mu_{+,2}(z^{*}_{n})=\tilde{b}_{n}e^{2i\theta(z^{*}_{n})}\mu_{-,1}(z^{*}_{n}).\label{4.5b}
\end{align}

Therefore, we get the residue condition as
\begin{align}
&\mathop{Res}_{z=z_{n}}\left[\frac{\mu_{+,1}(z)}{s_{11}(z)}\right]
=\frac{\mu_{+,1}(z_{n})}{s'_{11}(z)}
=\frac{b_{n}}{s'_{11}(z)}e^{-2i\theta(z_{n})}\mu_{-,2}(z_{n}),\label{4.6a}\\
&\mathop{Res}_{z=z^{*}_{n}}\left[\frac{\mu_{+,2}(z)}{s_{22}(z)}\right]
=\frac{\mu_{+,2}(z^{*}_{n})}{s'_{22}(z^{*})}
=\frac{\tilde{b}_{n}}{s'_{22}(z^{*})}e^{2i\theta(z^{*}_{n})}\mu_{-,1}(z^{*}_{n}).\label{4.6b}
\end{align}
For the convenience of calculation, introducing
\begin{align*}
C_{n}=\frac{b_{n}}{s'_{11}(z)},\quad \tilde{C}_{n}=\frac{\tilde{b}_{n}}{s'_{22}(z^{*})},
\end{align*}
then \eqref{4.6a} and \eqref{4.6b} become
\begin{align}
&\mathop{Res}_{z=z_{n}}\left[\frac{\mu_{+,1}(z)}{s_{11}(z)}\right]
=C_{n}e^{-2i\theta(z_{n})}\mu_{-,2}(z_{n}),\label{4.7a}\\
&\mathop{Res}_{z=z^{*}_{n}}\left[\frac{\mu_{+,2}(z)}{s_{22}(z)}\right]
=\tilde{C}_{n}e^{2i\theta(z^{*}_{n})}\mu_{-,1}(z^{*}_{n}).\label{4.7b}
\end{align}

Doing some simple calculations based on symmetry after substituting \eqref{4.4} into \eqref{4.3a} and \eqref{4.3b}, it is easy to get the relation between $b_{n}$ and $\tilde{b}_{n}$ by compare the results, i.e.,
\begin{align}\label{4.8}
-b^{*}_{n}=\tilde{b}_{n}.
\end{align}
Combining $s'_{11}(z_{n})=\left(s'_{22}(z^{*}_{n})\right)^{*}$, we have
\begin{align}\label{4.9}
\tilde{C}_{n}=-C^{*}_{n}.
\end{align}

Applying the symmetry of $\mu_{\pm}$ for
$\mu_{+,1}(z_{n})=b_{n}\mu_{-,2}(z_{n})e^{-2i\theta(z_{n})}$, it can be see
\begin{align}\label{4.10}
\psi_{+,2}\left(-\frac{q^{2}_{0}}{z_{n}}\right)
=-\frac{q^{*}_{-}}{q_{+}}b_{n}\psi_{-,1}\left(-\frac{q^{2}_{0}}{z_{n}}\right).
\end{align}
Similarly,
\begin{align}\label{4.11}
\psi_{+,1}\left(-\frac{q^{2}_{0}}{z^{*}_{n}}\right)
=-\frac{q_{-}}{q^{*}_{+}}\tilde{b}_{n}\psi_{-,2}\left(-\frac{q^{2}_{0}}{z^{*}_{n}}\right).
\end{align}
Next, we need to deal with $s_{11}(z)=\frac{q_{+}}{q_{-}}s_{22}\left(-\frac{q^{2}_{0}}{z}\right)$ as follows. Taking the derivative of $z$, substitute $z=z^{*}$, and making the conjugate on both sides, we can get
\begin{align}\label{4.12}
s'_{11}\left(-\frac{q^{2}_{0}}{z^{*}_{n}}\right)=
\left(\frac{q_{-}}{q_{+}}\right)^{*}\left(\frac{z_{n}}{q_{0}}\right)^{2}
\left(s'_{11}(z_{n})\right)^{*}.
\end{align}
In the same way, one obtains
\begin{align}\label{4.13}
s'_{22}\left(-\frac{q^{2}_{0}}{z_{n}}\right)=
\frac{q_{-}}{q_{+}}\left(\frac{z_{n}}{q_{0}}\right)^{2}\left(s'_{22}(z^{*}_{n})\right)^{*}.
\end{align}
Therefore, the residue conditions on $z=-\frac{q^{2}_{0}}{z^{*}_{n}}$ and $z=-\frac{q^{2}_{0}}{z_{n}}$ are
\begin{align}\label{4.14a}
\begin{aligned}
\mathop{Res}_{z=-\frac{q^{2}_{0}}{z^{*}_{n}}}\left[\frac{\mu_{+,1}(z)}{s_{11}(z)}\right]
&=-\frac{q_{-}}{q^{*}_{-}}\left(\frac{z_{n}}{q_{0}}\right)^{2}\tilde{C}_{n}
e^{-2i\theta(-q^{2}_{0}/z^{*}_{n})}\mu_{-,2}\left(-\frac{q^{2}_{0}}{z^{*}_{n}}\right)\\
&=C_{N+n}e^{-2i\theta(-q^{2}_{0}/z^{*}_{n})}\mu_{-,2}\left(-\frac{q^{2}_{0}}{z^{*}_{n}}\right),\\
\end{aligned}
\end{align}
\begin{align}\label{4.14b}
\begin{aligned}
\mathop{Res}_{z=-\frac{q^{2}_{0}}{z_{n}}}\left[\frac{\mu_{+,2}(z)}{s_{22}(z)}\right]
&=-\frac{q^{*}_{-}}{q_{-}}\left(\frac{z_{n}}{q_{0}}\right)^{2}C_{n}
e^{2i\theta(-q^{2}_{0}/z_{n})}\mu_{-,1}\left(-\frac{q^{2}_{0}}{z_{n}}\right)\\
&=\tilde{C}_{N+n}e^{-2i\theta(-q^{2}_{0}/z_{n})}\mu_{-,1}\left(-\frac{q^{2}_{0}}{z_{n}}\right),
\end{aligned}
\end{align}
where
\begin{align*}
C_{N+n}=-\frac{q_{-}}{q^{*}_{-}}\left(\frac{z_{n}}{q_{0}}\right)^{2}\tilde{C}_{n},\quad
\tilde{C}_{N+n}=-\frac{q^{*}_{-}}{q_{-}}\left(\frac{z_{n}}{q_{0}}\right)^{2}C_{n}.
\end{align*}

For convenience, numbering a quartet of discrete eigenvalues uniformly and defining $\xi_{n}=z_{n}$ and $\xi_{N+n}=-\frac{q^{2}_{0}}{z_{n}}$ for $n=1,2,\cdots,N$, the residue condition of $M^{+}(z)$ and $M^{-}(z)$ are
\begin{align}
&\mathop{Res}_{z=\xi_{n}}M^{+}(z)=\left(C_{n}e^{-2i\theta(\xi_{n})}\mu_{-,2}(\xi_{n}), 0\right),\\
&\mathop{Res}_{z=\xi^{*}_{n}}M^{-}(z)=\left(0, \tilde{C}_{n}e^{2i\theta(\xi^{*}_{n})}\mu_{-,1}(\xi^{*}_{n})\right),
\end{align}
here $n=1,2,\cdots,2N$.

\subsection{Reconstruct the formula for potential}

To solve the Riemann-Hilbert(RH) problem, we need to transform the original RH problem into a regular RH problem by subtracting the asymptotic properties and the pole contributions. Thus \eqref{3.5} can be rewritten as
\begin{gather}\label{4.16}
\begin{aligned}
&M^{-}(z)-\mathbb{I}-\frac{i}{z}\sigma_{3}Q_{-}-\sum^{2N}_{n=1}
\frac{\mathop{Res}\limits_{z=\xi^{*}_{n}}M^{-}(z)}{z-\xi^{*}_{n}}
-\sum^{2N}_{n=1}\frac{\mathop{Res}\limits_{z=\xi_{n}}M^{+}(z)}{z-\xi_{n}}\\
&=M^{+}(z)-\mathbb{I}-\frac{i}{z}\sigma_{3}Q_{-}-\sum^{2N}_{n=1}
\frac{\mathop{Res}\limits_{z=\xi^{*}_{n}}M^{-}(z)}{z-\xi^{*}_{n}}
-\sum^{2N}_{n=1}\frac{\mathop{Res}\limits_{z=\xi_{n}}M^{+}(z)}{z-\xi_{n}}-M^{+}(z)G(z).
\end{aligned}
\end{gather}
For \eqref{4.16}, we know that the first five terms on the left side are analyzed on $D_{-}$, while the first five terms on the right side are analyzed on $D_{+}$.
Introducing the projection operators $P_{\pm}$ on $\Sigma$ as
\begin{align}\label{4.17}
P_{\pm}[f](z)=\frac{1}{2\pi i}\int_{\Sigma}\frac{f(\zeta)}{\zeta-(z\pm i0)}d\zeta,
\end{align}
where $\int_{\Sigma}$ means the integral along the oriented contour shown in Fig.1 and $z\pm i0$
is the limit which is taken from the left/right of $z(z\in\Sigma)$ respectively.
Via applying Plemelj's formulae, \eqref{4.16} can be solved as
\begin{align}\label{4.18}
\begin{aligned}
M(z)=&\mathbb{I}+\frac{i}{z}\sigma_{3}Q_{-}
+\sum^{2N}_{n=1}\frac{\mathop{Res}_{z=\xi^{*}_{n}}M^{-}(z)}{z-\xi^{*}_{n}}
+\sum^{2N}_{n=1}\frac{\mathop{Res}_{z=\xi_{n}}M^{+}(z)}{z-\xi_{n}}\\
&-\frac{1}{2\pi i}\int_{\Sigma}\frac{M^{+}(s)G(s)}{s-z}ds,\quad z\in\mathbb{C}\setminus\Sigma.
\end{aligned}
\end{align}
Considering the residue condition, the second column of \eqref{4.18}, in $z=z_{n}$ and $z=-q^{2}_{0}/z^{*}_{n}$, i.e., $z=\xi_{n}$, is solved as
\begin{align}\label{4.19a}
\mu_{-,2}(\xi_{n})=\left(\begin{array}{c}
                     -q^{*}_{-}/\xi_{n} \\
                     1
                   \end{array}\right)
+\sum^{2N}_{k=1}\frac{\tilde{C}_{k}e^{2i\theta(\xi^{*}_{k})}}{\xi_{n}-\xi^{*}_{k}}
\mu_{-,1}(\xi^{*}_{k})+\frac{1}{2\pi i}\int_{\Sigma}\frac{(M^{+}G)_{2}(s)}{s-\xi_{n}}ds.
\end{align}
Similarly, the first column of \eqref{4.18} in $z=z^{*}_{n}$ and $z=-q^{2}_{0}/z_{n}$, i.e., $z=\xi^{*}_{n}$, is
\begin{align}\label{4.19b}
\mu_{-,1}(\xi^{*}_{n})=\left(\begin{array}{c}
                     1\\
                     q_{-}/\xi^{*}_{n}
                   \end{array}\right)
+\sum^{2N}_{j=1}\frac{C_{j}e^{-2i\theta(\xi_{j})}}{\xi^{*}_{n}-\xi_{j}}
\mu_{-,2}(\xi_{j})+\frac{1}{2\pi i}\int_{\Sigma}\frac{(M^{+}G)_{1}(s)}{s-\xi^{*}_{n}}ds.
\end{align}
When $z\rightarrow\infty$, asymptotic expansion of \eqref{4.18} is
\begin{align}\label{4.20}
\begin{aligned}
M(z)=\mathbb{I}+\frac{1}{z}&\left\{
i\sigma_{3}Q_{-}
+\sum^{2N}_{n=1}\left(\mathop{Res}_{z=\xi^{*}_{n}}M^{-}(z)+\mathop{Res}_{z=\xi_{n}}M^{+}(z)\right)\right.\\
&\left.-\frac{1}{2\pi i}\int_{\Sigma}M^{+}(s)G(s)ds\right\}+O\left(\frac{1}{z^{2}}\right).
\end{aligned}
\end{align}
Taking $M=M^{-}$, we obtain the follow result by combining the (2, 1)-element of asymptotic behavior of $\mu_{\pm}$ as
\begin{align}\label{4.21}
q=q_{-}+\sum^{2N}_{n=1}C_{n}e^{-2i\theta(\xi_{n})}\mu_{-,22}(\xi_{n})-
\frac{1}{2\pi i}\int_{\Sigma}(M^{+}(s)G(s))_{21}ds.
\end{align}

\subsection{Trace formulate and theta condition}
Recall that $s_{11}$ and $s_{22}$ are analytic on $D^{+}$ and $D^{-}$ respectively, and the discrete spectrum is $\mathbb{Z}=\{z_{n}, z^{*}_{n}, -q^{2}_{0}/z_{n}, -q^{2}_{0}/z^{*}_{n}\}$ for $n=1,2,\cdots,N$. Constructing the follow functions
\begin{align}\label{4.22}
\begin{split}
\beta^{+}_{1}(z)=s_{11}(z)\prod^{N}_{n=1}\frac{(z-z^{*}_{n})(z+q^{2}_{0}/z_{n})}
{(z-z_{n})(z+q^{2}_{0}/z^{*}_{n})},\\
\beta^{-}_{1}(z)=s_{22}(z)\prod^{N}_{n=1}\frac{(z-z_{n})(z+q^{2}_{0}/z^{*}_{n})}
{(z-z^{*}_{n})(z+q^{2}_{0}/z_{n})},
\end{split}
\end{align}
here $\beta^{+}_{1}(z)$ and $\beta^{-}_{1}(z)$ are analytic in $D^{+}$ and $D^{-}$ respectively, and there is no zero point.

In addition, $\beta^{\pm}_{1}(z)\rightarrow1$ as $z\rightarrow\infty$ via the asymptotic behavior of the scattering data.

For all $z\in\Sigma$, we have $\beta^{+}_{1}(z)\beta^{-}_{1}(z)=s_{11}(z)s_{22}(z)$. According to $detS(z)=1$, it can be seen that
\begin{align*}
\frac{1}{s_{11}(z)s_{22}(z)}=1-\rho(z)\tilde{\rho}(z)=1+\rho(z)\rho^{*}(z^{*}),
\end{align*}
then
\begin{align}\label{4.23}
\beta^{+}_{1}(z)\beta^{-}_{1}(z)=\frac{1}{1+\rho(z)\rho^{*}(z^{*})},\quad z\in\Sigma.
\end{align}
By taking the logarithm of both sides of the above formula and using the Plemelj's formulae we know
\begin{align}\label{4.24}
\log\beta^{\pm}_{1}(z)=\mp\frac{1}{2\pi i}\int_{\Sigma}\frac{\log[1+\rho(\zeta)\rho^{*}(\zeta^{*})]}
{\zeta-z}d\zeta, \quad z\in D^{\pm}.
\end{align}

Substituting \eqref{4.24} into \eqref{4.22}, we obtain trace formulate as
\begin{align}\label{4.25}
\begin{split}
&s_{11}(z)=exp\left[-\frac{1}{2\pi i}\int_{\Sigma}\frac{\log[1+\rho(\zeta)\rho^{*}(\zeta^{*})]}
{\zeta-z}d\zeta\right]\prod^{N}_{n=1}\frac{(z-z_{n})(z+q^{2}_{0}/z^{*}_{n})}
{(z-z^{*}_{n})(z+q^{2}_{0}/z_{n})},\quad z\in D^{+},\\
&s_{22}(z)=exp\left[\frac{1}{2\pi i}\int_{\Sigma}\frac{\log[1+\rho(\zeta)\rho^{*}(\zeta^{*})]}
{\zeta-z}d\zeta\right]\prod^{N}_{n=1}\frac{(z-z^{*}_{n})(z+q^{2}_{0}/z_{n})}
{(z-z_{n})(z+q^{2}_{0}/z^{*}_{n})},\quad z\in D^{-}.
\end{split}
\end{align}
Taking $z\rightarrow0$ for \eqref{4.25}, theta condition can be obtained as
\begin{align}\label{4.26}
arg\left(\frac{q_{+}}{q_{-}}\right)=\frac{1}{2\pi }\int_{\Sigma}
\frac{\log[1+\rho(\zeta)\rho^{*}(\zeta^{*})]}{\zeta-z}d\zeta
+4\sum^{N}_{n=1}argz_{n}.
\end{align}

\subsection{Reflection-less potential}

Now, we study the potential $q(x,t)$ when the reflection coefficient $\rho(z)$ gradually disappears. At this point, without the jump from $M^{+}$ to $M^{-}$, the inverse problem will also degenerate into an algebraic system. According to \eqref{4.19a} and \eqref{4.19b}, it is easy to see that
\begin{align}
&\mu_{-,22}(\xi_{n})=1+\sum^{2N}_{k=1}\frac{\tilde{C}_{k}}{\xi_{n}-\xi^{*}_{k}}
e^{2i\theta(\xi^{*}_{k})}\mu_{-,21}(\xi^{*}_{k})
=1-\sum^{2N}_{k=1}c^{*}_{j}(\xi^{*}_{k})\mu_{-,21}(\xi^{*}_{k}),\label{4.27a}\\
&\mu_{-,21}(\xi^{*}_{k})=\frac{q_{-}}{\xi^{*}_{k}}+\sum^{2N}_{j=1}\frac{C_{j}}{\xi^{*}_{k}-\xi_{j}}
e^{-2i\theta(\xi_{j})}\mu_{-,22}(\xi_{j})
=\frac{q_{-}}{\xi^{*}_{k}}+\sum^{2N}_{j=1}c_{j}(\xi^{*}_{k})\mu_{-,22}(\xi_{j}),\label{4.27b}
\end{align}
here
\begin{align*}
c_{j}(z)=\frac{C_{j}}{z-\xi_{j}}e^{-2i\theta(\xi_{j})},
\end{align*}
with $j=1,2,\cdots,2N$.
Substituting \eqref{4.27b} into \eqref{4.27a} get
\begin{align}\label{4.29}
\mu_{-,22}(\xi_{n})=1-\sum^{2N}_{k=1}c^{*}_{k}(\xi^{*}_{n})\frac{q_{-}}{\xi^{*}_{k}}
-\sum^{2N}_{k=1}\sum^{2N}_{j=1}c^{*}_{k}(\xi^{*}_{n})c_{j}(\xi^{*}_{k})\mu_{-,22}(\xi_{j}).
\end{align}
Introducing
\begin{align}
&X_{n}=\mu_{-,22}(\xi_{n}),\quad X=(X_{1},\cdots,X_{2N})^{T}\\
&B_{n}=1-\sum^{2N}_{k=1}c^{*}_{k}(\xi^{*}_{n})\frac{q_{-}}{\xi^{*}_{k}},
\quad B=(B_{1},\cdots,B_{2N})^{T},\\
&A_{n,j}=\sum^{2N}_{j=1}c^{*}_{k}(\xi^{*}_{n})c_{j}(\xi^{*}_{k}),\quad
A=(A_{n,j})_{2N\times2N},
\end{align}
then \eqref{4.29} is converted into component form
\begin{align}
MX=B,
\end{align}
where $M=I+A=(M_{1},\cdots,M_{2N})$.

Using Cramer's Rule, we can calculate that
\begin{align}
X_{n}=\frac{detM^{ext}_{n}}{detM},
\end{align}
here $M^{ext}_{n}=(M_{1},\cdots,M_{n-1},B,M_{n+1},\cdots,M_{2N})$.
Thus
\begin{align}
q(x,t)=q_{-}-\frac{detM^{aug}}{detM},
\end{align}
where
\begin{align*}
&M^{aug}=\left(\begin{array}{cc}
          0 & Y^{T} \\
          B & M
        \end{array}\right),\\
&Y_{n}=C_{n}e^{-2i\theta(\xi_{n})}, Y=(Y_{1},\cdots,Y_{2N}).
\end{align*}

\subsection{Soliton solutions with the single poles}

In the last subsection, we specifically obtain the exact expression of the soliton solution of the sNLS equation and the images of the solutions by selecting appropriate parameters to study the propagation behavior and dynamic behavior of the solution.

Let $N=1$, we discuss the figures under the appropriate parameters.
\\

{\rotatebox{0}{\includegraphics[width=2.75cm,height=2.5cm,angle=0]{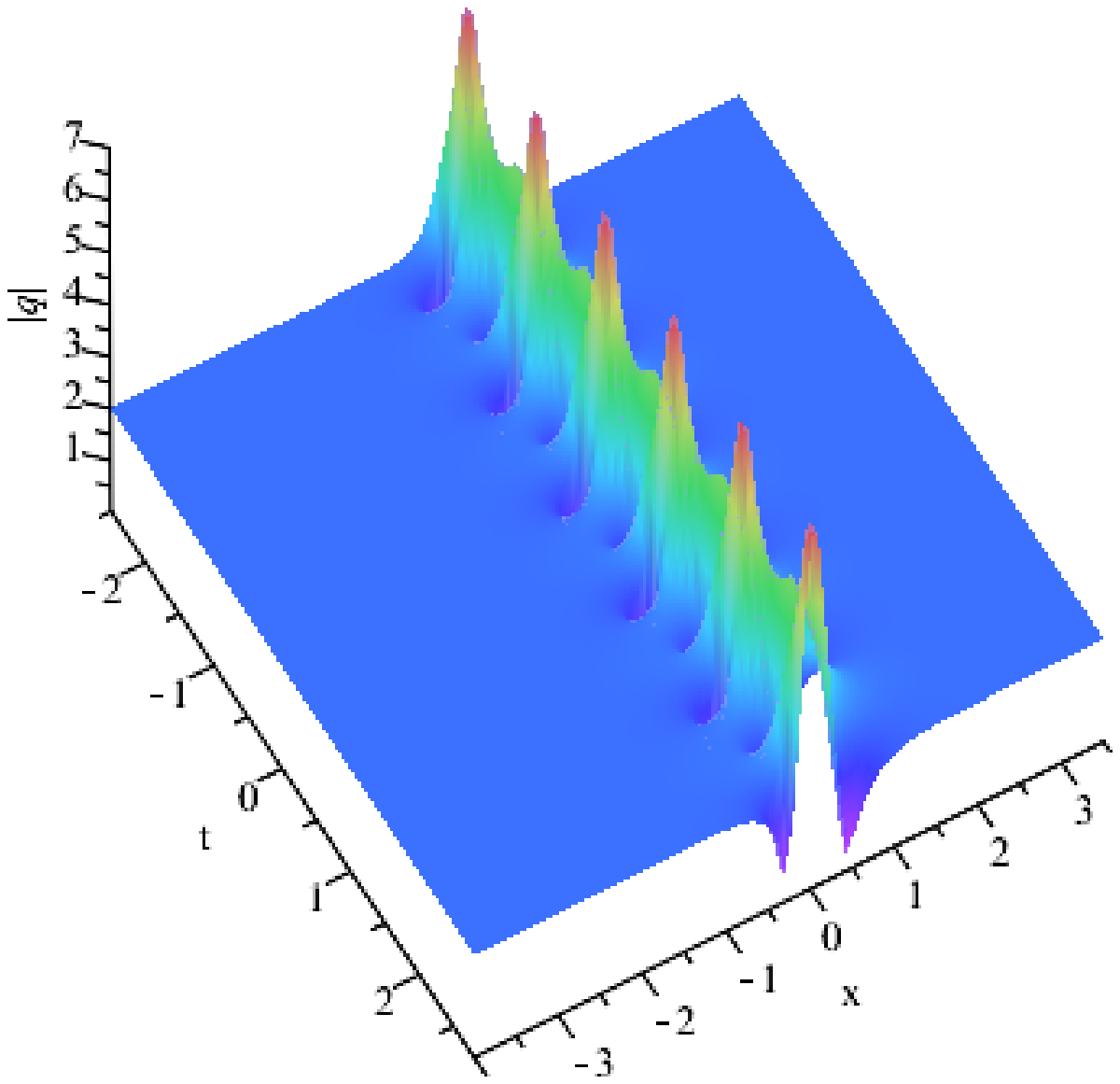}}}
{\rotatebox{0}{\includegraphics[width=2.75cm,height=2.5cm,angle=0]{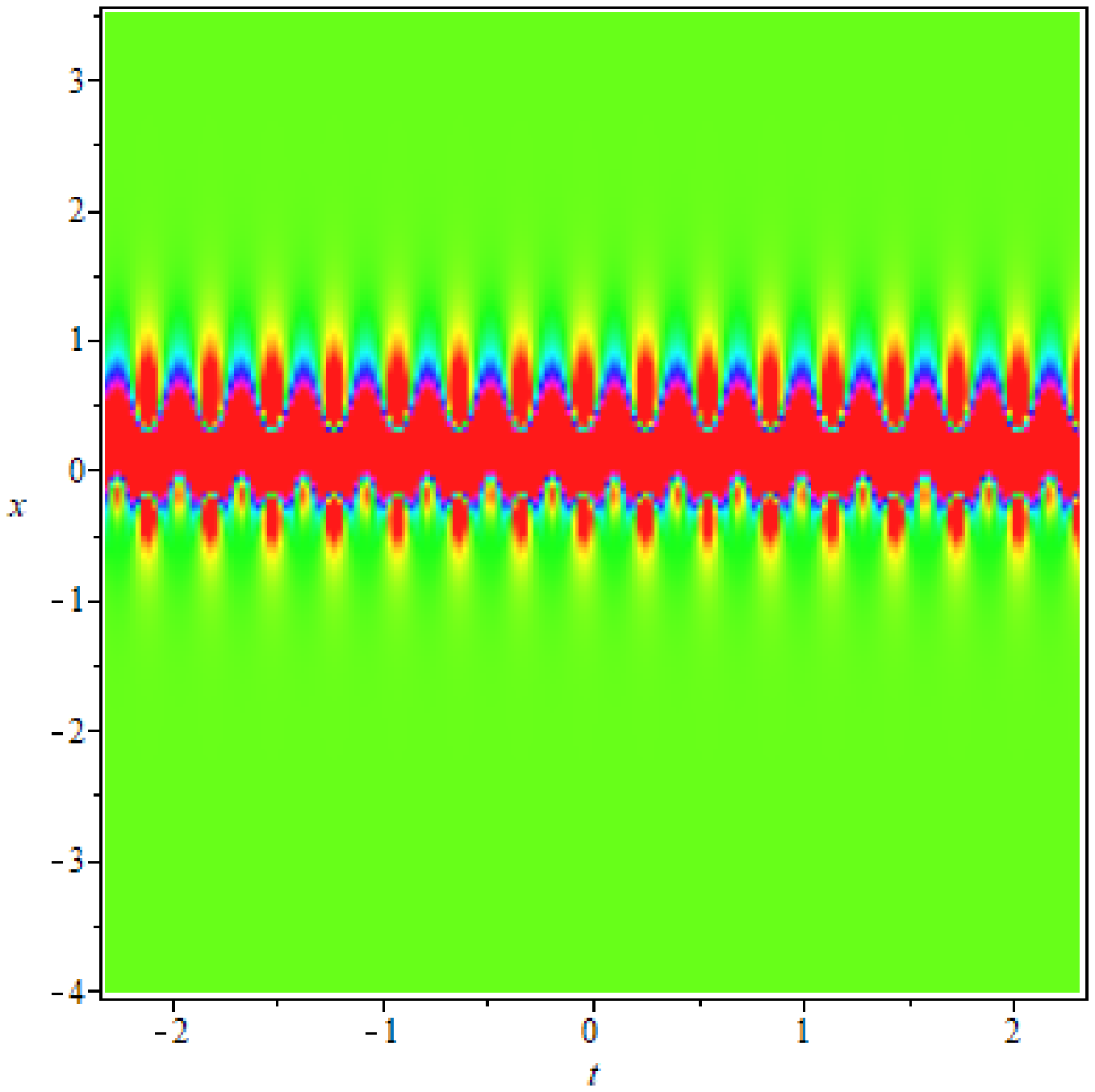}}}
{\rotatebox{0}{\includegraphics[width=2.75cm,height=2.5cm,angle=0]{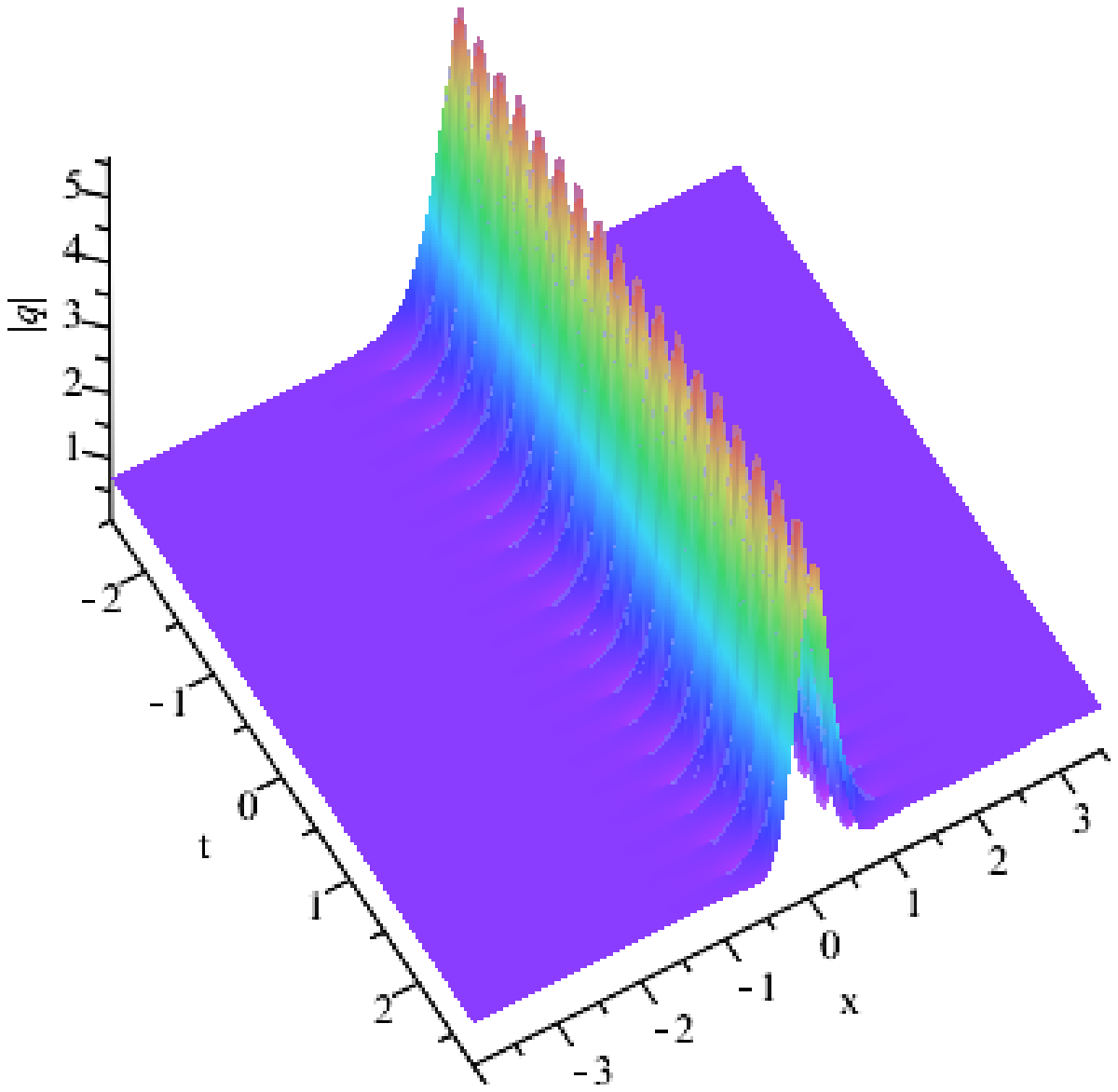}}}
{\rotatebox{0}{\includegraphics[width=2.75cm,height=2.5cm,angle=0]{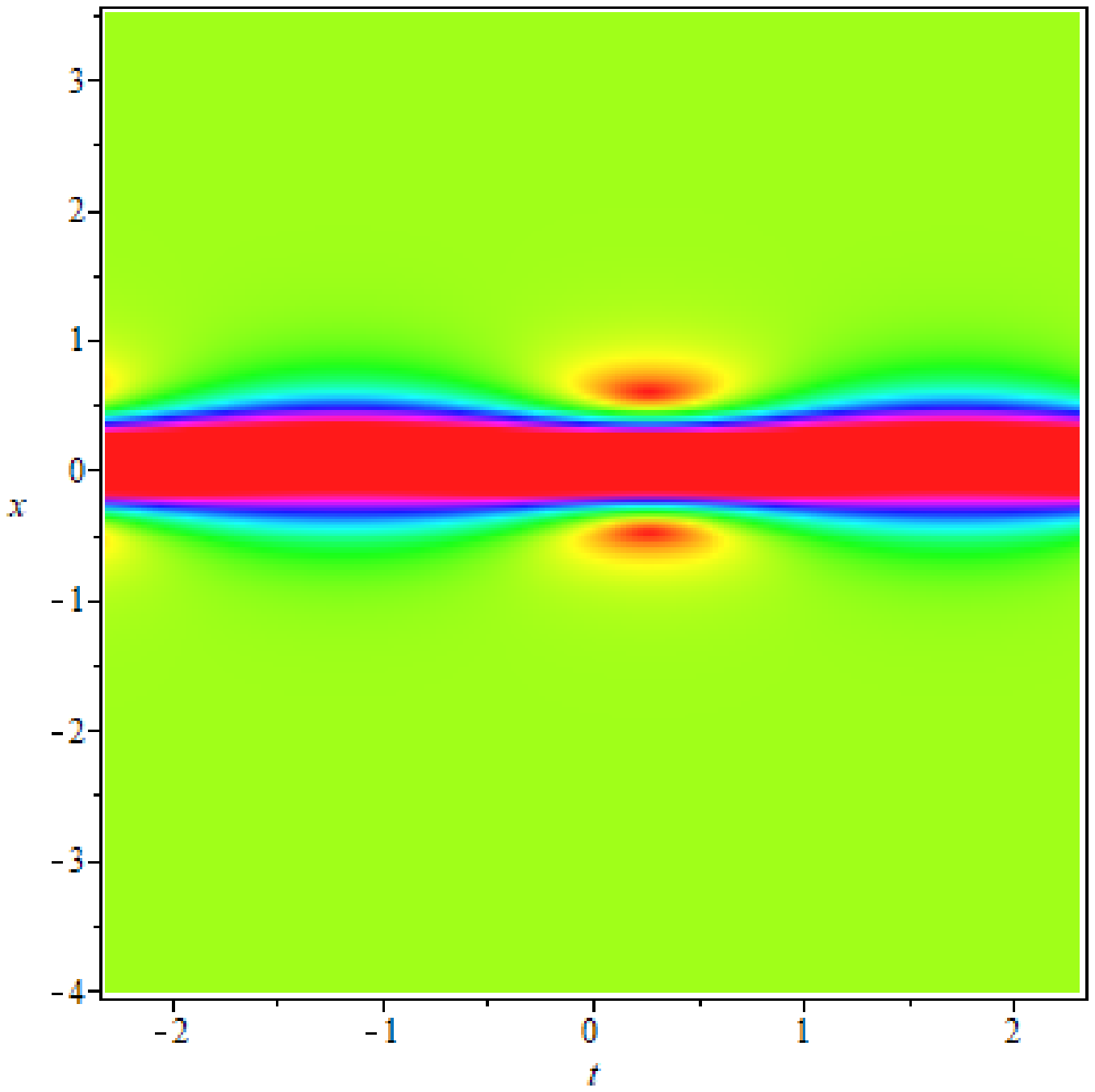}}}

 $\quad\qquad(\textbf{a})\qquad \ \quad\qquad\qquad(\textbf{b})
\ \qquad\quad\quad\quad\qquad(\textbf{c})\qquad\qquad\quad\qquad(\textbf{d})
\ \qquad$\\
\noindent { \small \textbf{Figure 2.}
The breather wave solution for $|q|$ with the parameters selection
$\delta=1, \xi_{1}=5i$.
$\textbf{(a)}$ the soliton solution with $q_{-}=2$,
$\textbf{(b)}$ the density plot corresponding to (a),
$\textbf{(c)}$ the soliton solution with $q_{-}=0.7i$,
$\textbf{(d)}$ the density plot corresponding to (c).}\\

In Fig.2, we have discussed the image when $q_{-}$ is a real number and an imaginary number, and found that the breathing wave solution can be obtained in both cases.

\noindent
{\rotatebox{0}{\includegraphics[width=3.5cm,height=3.5cm,angle=0]{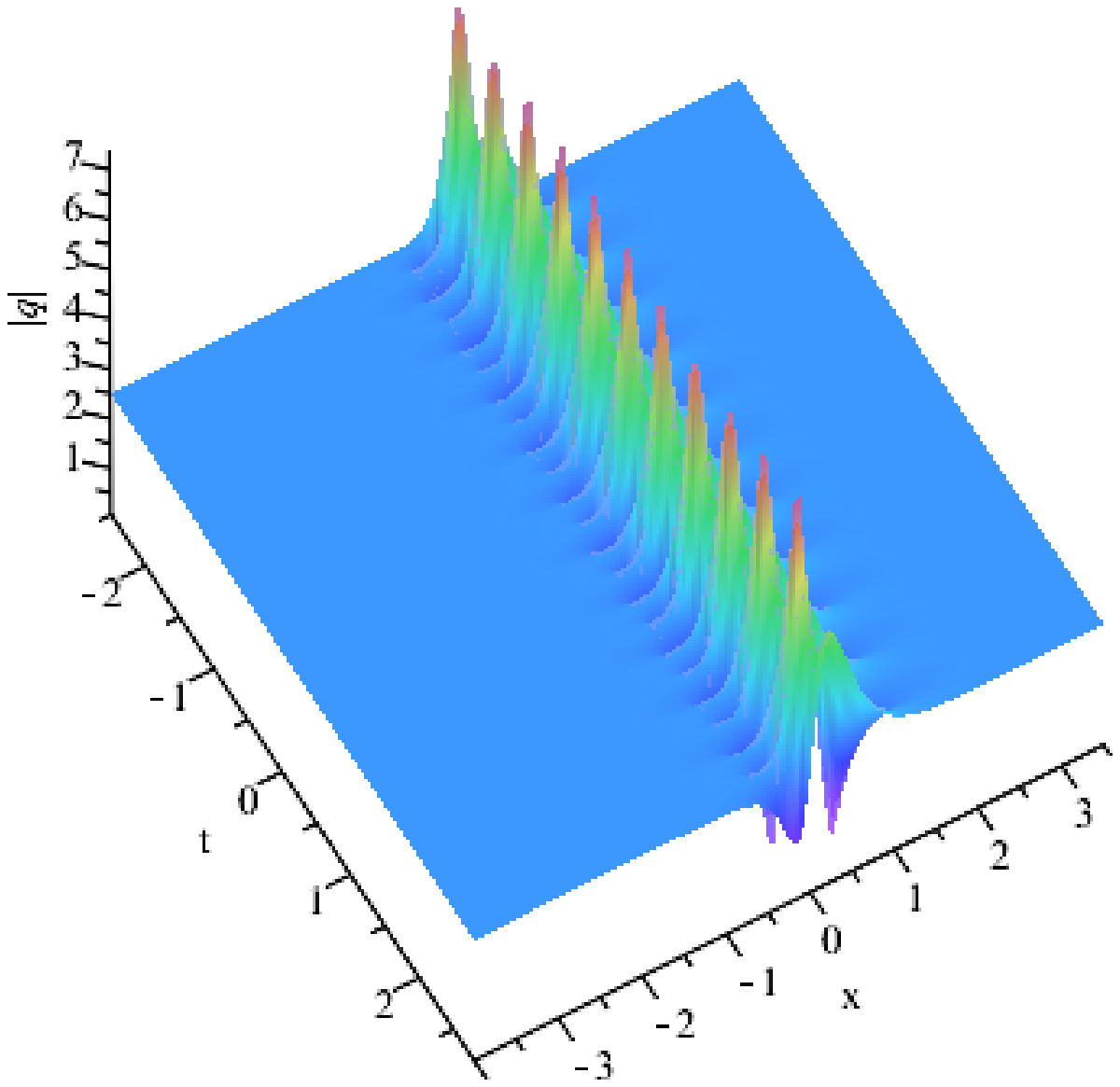}}}
\qquad\quad
{\rotatebox{0}{\includegraphics[width=3.5cm,height=3.5cm,angle=0]{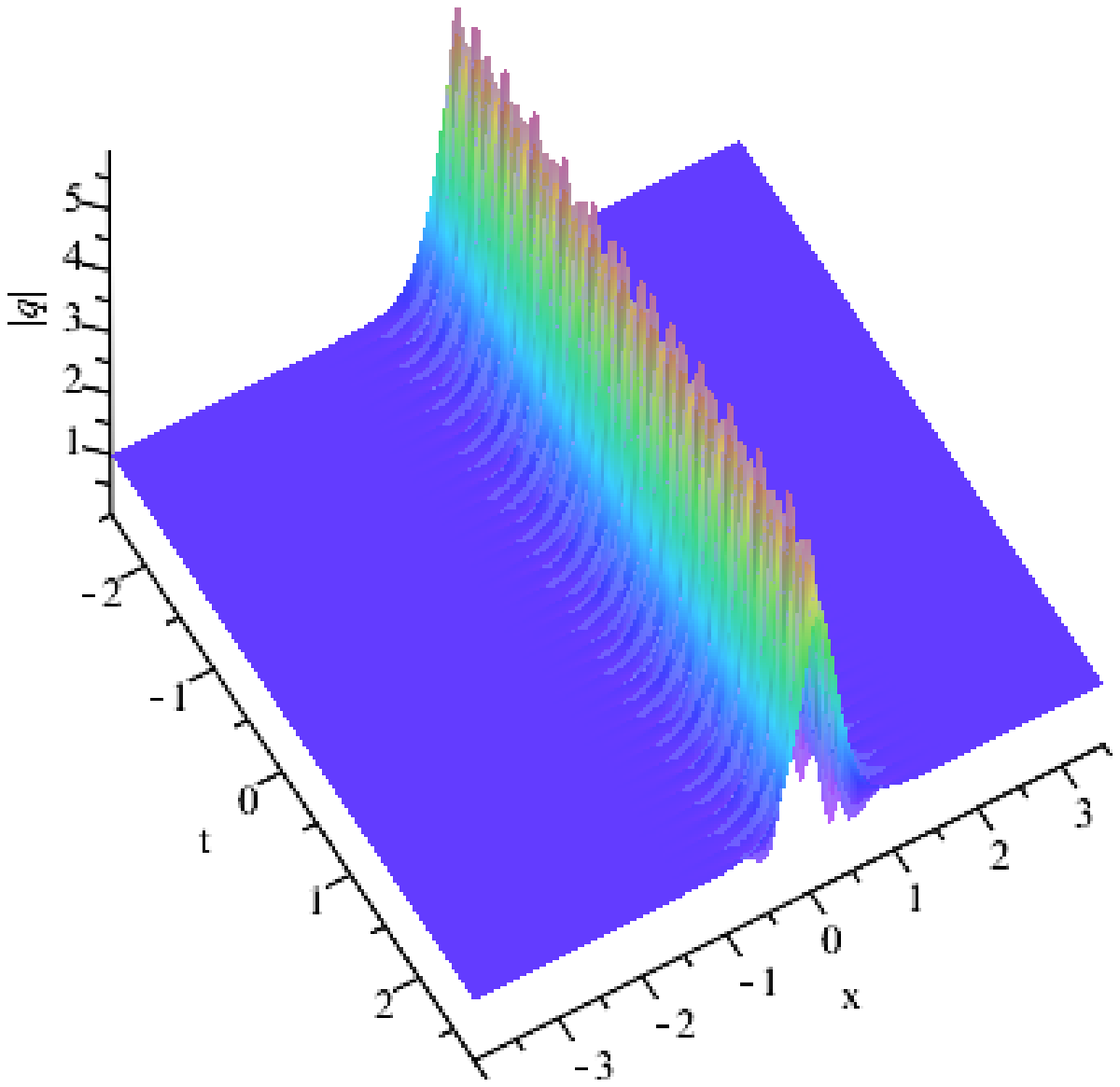}}}
\qquad~
{\rotatebox{0}{\includegraphics[width=3.5cm,height=3.5cm,angle=0]{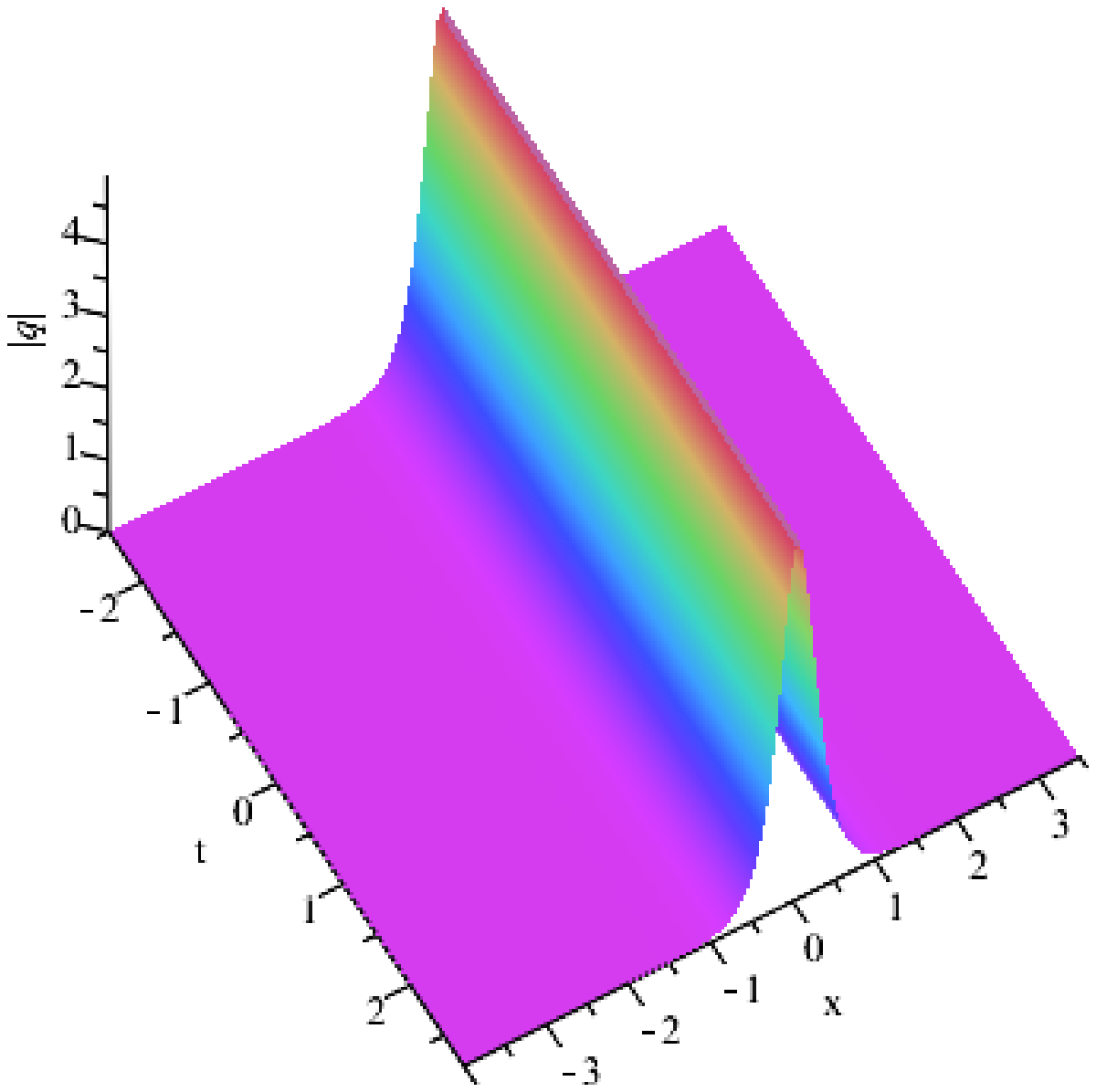}}}

\qquad\quad $(a)$
\qquad\qquad\qquad\qquad\qquad\quad\quad$(b)$\quad\qquad  \qquad\qquad\qquad\qquad  $(c)$\\
\noindent { \small \textbf{Figure 3.}
The breather wave solution for $|q|$ with the parameters selection
$\delta=1, \xi_{1}=5i$.
$\textbf{(a)}$ the soliton solution with $q_{-}=2.5$,
$\textbf{(b)}$ the soliton solution with $q_{-}=1$,
$\textbf{(c)}$ the soliton solution with $q_{-}=0.01$.}\\

In Fig.3, when the value of $q_{-}$ is different, the shape of the wave is also different. When the value of $q_{-}$ gets smaller and smaller and gradually approaches 0, the breathing wave gradually becomes a bright soliton wave.

Next, we discuss the figures of the soliton when $N=2$. In Fig.4, it is Obvious to seen that the two solitons are side by side and do not affect each other.
\\

\noindent
{\rotatebox{0}{\includegraphics[width=3.5cm,height=3.5cm,angle=0]{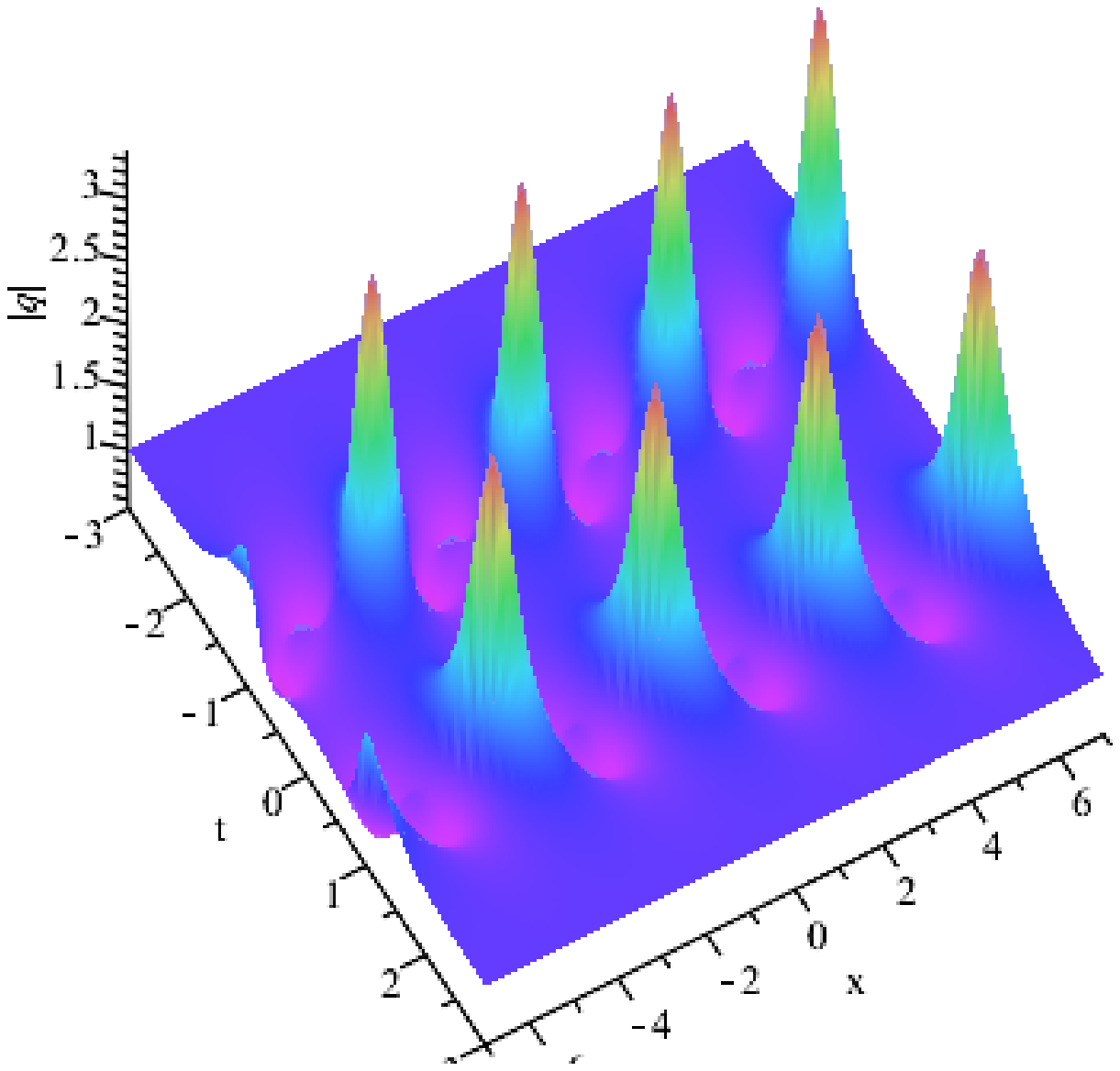}}}
\qquad\quad
{\rotatebox{0}{\includegraphics[width=3.5cm,height=3.5cm,angle=0]{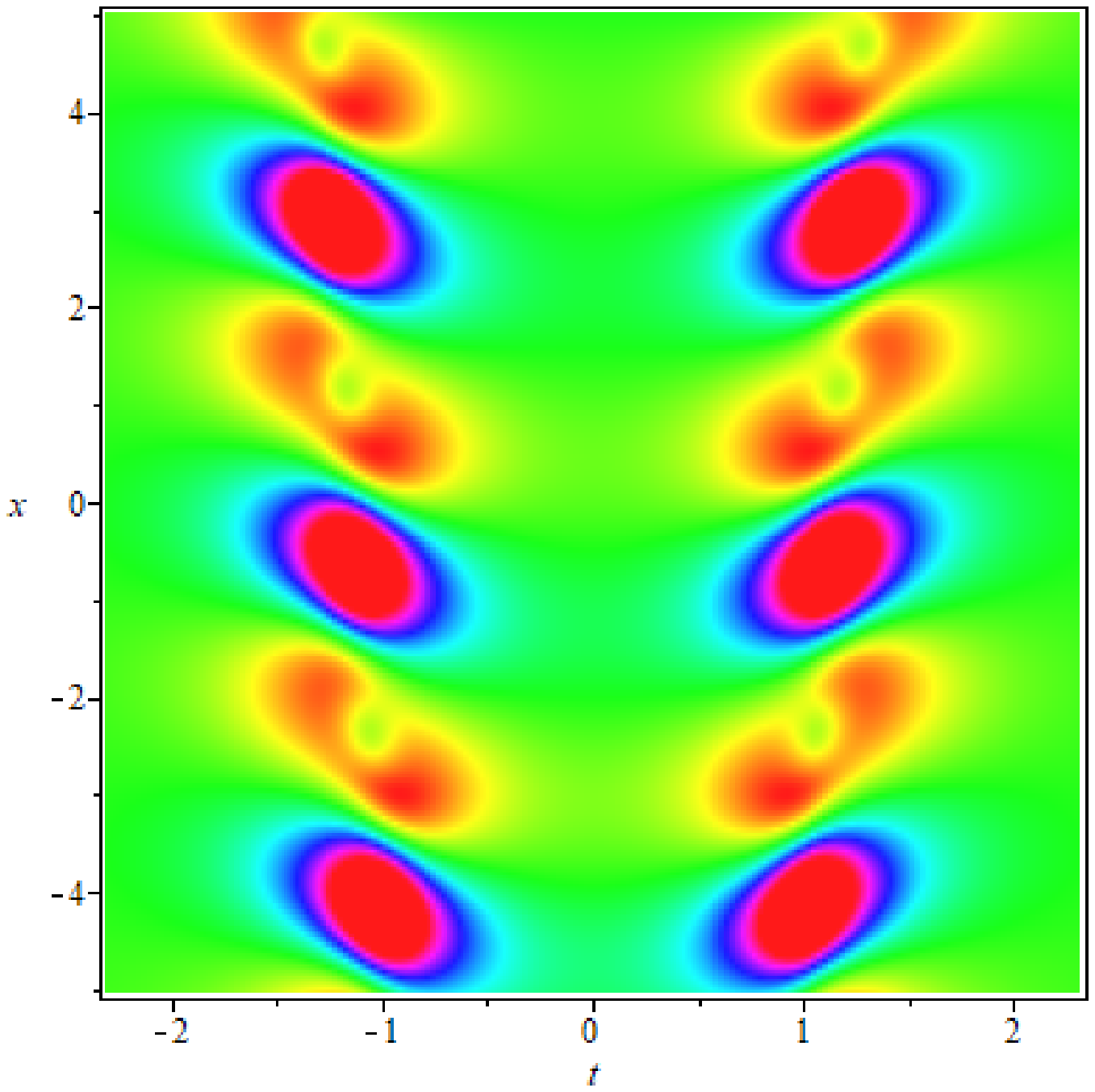}}}
\qquad~
{\rotatebox{0}{\includegraphics[width=3.5cm,height=3.5cm,angle=0]{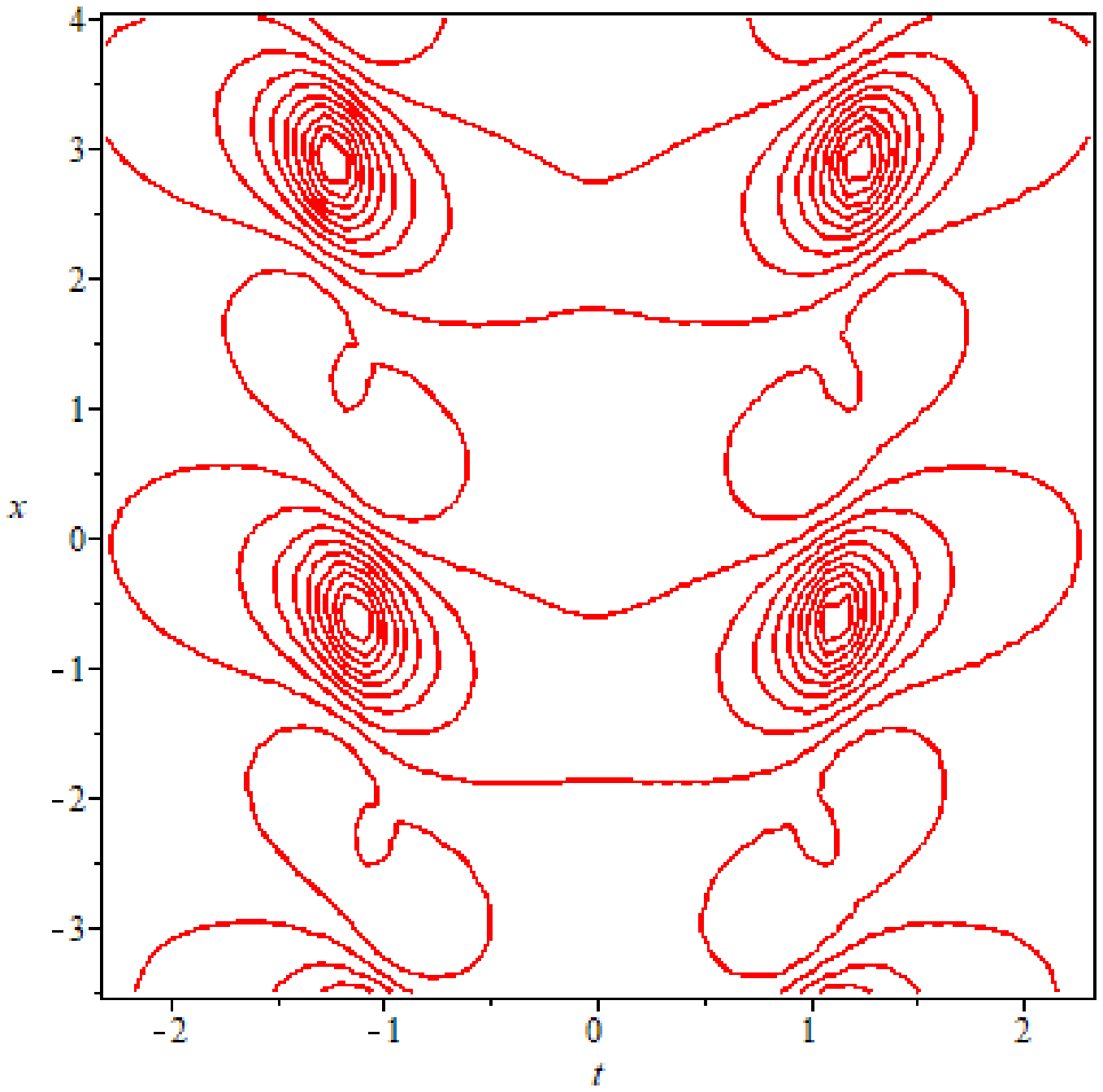}}}

\qquad\quad $(a)$
\qquad\qquad\qquad\qquad\qquad\quad\quad$(b)$\quad\qquad  \qquad\qquad\qquad\qquad  $(c)$\\
\noindent { \small \textbf{Figure 4.}
The breather wave solution for $|q|$ with the parameters selection
$\delta=0.3, \xi_{1}=1+0.5i, \xi_{2}=-1+0.5i$.
$\textbf{(a)}$ the soliton solution with $q_{-}=1$,
$\textbf{(b)}$ density plot corresponding to (a),
$\textbf{(c)}$ the contour line of the soliton solution corresponding to (a).\\

\section{The inverse scattering transform with the double poles}

In this section, we will discuss the case where $z_{n}$ is a double pole of $s_{11}$, i.e., $s_{11}(z_{n})=s'_{11}(z_{n})=0$, but $s''_{11}(z_{n})\neq0$. Therefore, our calculations in this section will be different from the simple poles case.

\subsection{Discrete spectrum and residue condition}

According to the symmetry of the scattering data $S(z)$, if $s_{11}(z_{n})=0$,
\begin{align*}
s_{22}(z^{*}_{n})=s_{22}(-q^{2}_{0}/z_{n})=s_{11}(-q^{2}_{0}/z^{*}_{n})=0.
\end{align*}
Then discrete spectrum is
\begin{align}
\mathbb{Z}=\{Z_{n},z^{*}_{n},-q^{2}_{0}/z_{n},-q^{2}_{0}/z^{*}_{n}\},\quad n=1,\cdots,N.
\end{align}
For convenience, defining $\hat{z}_{n}=-q^{2}_{0}/z_{n}$ and $\check{z}_{n}=-q^{2}_{0}/z^{*}_{n}$.

From the simple poles, we know
\begin{align}\label{5.3a}
\mu_{+,1}(z_{n})=b_{n}e^{-2i\theta_{n}}\mu_{-,2}(z_{n}).
\end{align}
Because $z_{n}$ is the double pole of $s_{11}(z)$, $s'_{11}(z_{n})=0$. For $s_{11}=\frac{1}{\gamma}W(\psi_{+,1},\psi_{-,2})$, there are
\begin{align*}
0=s'_{11}(z_{n})=\left[\frac{1}{\gamma}\left(W(\psi'_{+,1},\psi_{-,2})+W(\psi_{+,1},\psi'_{-,2})
-\frac{\gamma'}{\gamma^{2}}W(\psi_{+,1},\psi_{-,2})
\right)\right]_{z=z_{n}},
\end{align*}
i.e.,
\begin{align*}
0=W(\psi'_{+,1}(z_{n})-b_{n}\psi'_{-,2}(z_{n}),\psi_{-,2}(z_{n})).
\end{align*}
Thus, there is a constant $d_{n}$ that satisfies
\begin{align*}
\psi'_{+,1}(z_{n})-b_{n}\psi'_{-,2}(z_{n})=d_{n}\psi_{-,2}(z_{n}),
\end{align*}
namely,
\begin{align*}
\psi'_{+,1}(z_{n})=d_{n}\psi_{-,2}(z_{n})+b_{n}\psi'_{-,2}(z_{n}).
\end{align*}
Using \eqref{4.4} and \eqref{5.3a}, the follow result can be obtained
\begin{align}\label{5.3b}
\mu'_{+,1}(z_{n})=e^{-2i\theta_{n}}\left[(d_{n}-2i\theta'_{n}b_{n})\mu_{-,2}(z_{n})
+b_{n}\mu'_{-,2}(z_{n})\right],
\end{align}
here $\theta_{n}=\theta(z_{n})$, $\theta'_{n}=\theta'(z_{n})$.

In the same way, we also get the follow relations
\begin{align}
&\mu_{+,2}(z^{*}_{n})=\tilde{b}_{n}e^{2i\theta^{*}_{n}}\mu_{-,1}(z^{*}_{n}),\label{5.3c}\\
&\mu'_{+,2}(z^{*}_{n})=e^{2i\theta^{*}_{n}}\left[(\tilde{d}_{n}+2i\theta'^{*}_{n}\tilde{b}_{n})
\mu_{-,1}(z^{*}_{n})+\tilde{b}_{n}\mu'_{-,1}(z^{*}_{n})\right],\label{5.3d}\\
&\mu_{+,1}(\check{z}_{n})=\check{b}_{n}e^{-2i\check{\theta}_{n}}\mu_{-,2}(\check{z}_{n}),\label{5.3e}\\
&\mu'_{+,1}(\check{z}_{n})=e^{-2i\check{\theta}_{n}}\left[(\check{d}_{n}-2i\check{\theta}'_{n}
\check{b}_{n})\mu_{-,2}(\check{z}_{n})+\check{b}_{n}\mu'_{-,2}(\check{z}_{n})\right],\label{5.3f}\\
&\mu_{+,2}(\hat{z}_{n})=\hat{b}_{n}e^{2i\hat{\theta}_{n}}\mu_{-,1}(\hat{z}_{n}),\label{5.3g}\\
&\mu'_{+,2}(\hat{z}_{n})=e^{2i\hat{\theta}_{n}}\left[(\hat{d}_{n}+2i\hat{\theta}'_{n}\hat{b}_{n})
\mu_{-,1}(\hat{z}_{n})+\hat{b}_{n}\mu'_{-,1}(\hat{z}_{n})\right],\label{5.3h}
\end{align}
where
\begin{align*}
&\theta^{*}_{n}=\theta(z^{*}_{n}),\quad~~ \check{\theta}_{n}=\theta(\check{z}_{n}), ~\quad \hat{\theta}_{n}=\theta(\hat{z}_{n}),\\
&\theta^{'*}_{n}=\theta'(z^{*}_{n}),\quad \check{\theta}'_{n}=\theta'(\check{z}_{n}),\quad \hat{\theta}'_{n}=\theta'(\hat{z}_{n}),
\end{align*}
and $\tilde{b}_{n}, \tilde{d}_{n}, \check{b}_{n}, \check{d}_{n}, \hat{b}_{n}$ and $\hat{d}_{n}$ are constants.

Similar to the proof of \eqref{4.8}, we also have the following relationship
\begin{gather}\label{S1}
\begin{gathered}
\hat{b}_{n}=-\frac{q^{*}_{-}}{q_{+}}b_{n}, \quad
-\check{b}_{n}=\hat{b}^{*}_{n},\\
d^{*}_{n}=-\tilde{d}_{n},\quad
\hat{d}_{n}=-\frac{q^{*}_{-}z^{2}_{n}}{q_{+}q^{2}_{0}}d_{n},\quad
\check{d}_{n}=-\hat{d}^{*}_{n}.
\end{gathered}
\end{gather}

If $f$ and $g$ is analyzed in $\Omega\in\mathbb{C}$, and $g$ has a double zero at $z_{0}\in\Omega$ and $f(z_{0})\neq0$, then
\begin{align*}
\mathop{Res}_{z=z_{0}}\left[\frac{f}{g}\right]=\frac{2f'(z_{0})}{g''(z_{0})}-
\frac{2f(z_{0})g'''(z_{0})}{3(g''(z_{0}))^{2}},\quad
\mathop{P_{-2}}_{z=z_{0}}\left[\frac{f}{g}\right]=\frac{2f(z_{0})}{g''(z_{0})}.
\end{align*}
Thus
\begin{align}
&\mathop{P_{-2}}_{z=z_{n}}\left[\frac{\mu_{+,1}}{s_{11}}\right]=\frac{2\mu_{+,1}(z_{n})}{s''_{11}(z_{n})}
=\frac{2b_{n}}{s''_{11}(z_{n})}e^{-2i\theta_{n}}\mu_{-,2}(z_{n}),\\
&\mathop{Res}_{z=z_{n}}\left[\frac{\mu_{+,1}}{s_{11}}\right]=
\frac{2b_{n}}{s''_{11}(z_{n})}e^{-2i\theta_{n}}\left[\mu'_{-,2}(z_{n})+
\left(\frac{d_{n}}{b_{n}}-2i\theta'_{n}-\frac{s'''_{11}(z_{n})}{3s''_{11}(z_{n})}\right)\mu_{-,2}(z_{n})\right].
\end{align}

Let $A_{n}=\frac{2b_{n}}{s''_{11}(z_{n})}$ and $B_{n}=\frac{d_{n}}{b_{n}}-\frac{s'''_{11}(z_{n})}{3s''_{11}(z_{n})}$, then the above equations change into
\begin{align}
&\mathop{P_{-2}}_{z=z_{n}}\left[\frac{\mu_{+,1}}{s_{11}}\right]=
A_{n}e^{-2i\theta_{n}}\mu_{-,2}(z_{n}),\\
&\mathop{Res}_{z=z_{n}}\left[\frac{\mu_{+,1}}{s_{11}}\right]=
A_{n}e^{-2i\theta_{n}}\left[\mu'_{-,2}(z_{n})+
\left(B_{n}-2i\theta'_{n}\right)\mu_{-,2}(z_{n})\right].
\end{align}
In the same way, we also get the similar relations as follows
\begin{align}
&\mathop{P_{-2}}_{z=z^{*}_{n}}\left[\frac{\mu_{+,2}}{s_{22}}\right]=
\tilde{A}_{n}e^{2i\theta^{*}_{n}}\mu_{-,1}(z^{*}_{n}),\\
&\mathop{Res}_{z=z^{*}_{n}}\left[\frac{\mu_{+,2}}{s_{22}}\right]=
\tilde{A}_{n}e^{2i\theta^{*}_{n}}\left[\mu'_{-,1}(z^{*}_{n})+
\left(\tilde{B}_{n}+2i\theta'^{*}_{n}\right)\mu_{-,1}(z^{*}_{n})\right],\\
&\mathop{P_{-2}}_{z=\check{z}_{n}}\left[\frac{\mu_{+,1}}{s_{11}}\right]=
\check{A}_{n}e^{-2i\check{\theta}_{n}}\mu_{-,2}(\check{z}_{n}),\\
&\mathop{Res}_{z=\check{z}_{n}}\left[\frac{\mu_{+,1}}{s_{11}}\right]=
\check{A}_{n}e^{-2i\check{\theta}_{n}}\left[\mu'_{-,2}(\check{z}_{n})+
\left(\check{B}_{n}-2i\check{\theta}'_{n}\right)\mu_{-,2}(\check{z}_{n})\right],\\
&\mathop{P_{-2}}_{z=\hat{z}_{n}}\left[\frac{\mu_{+,2}}{s_{22}}\right]=
\hat{A}_{n}e^{2i\hat{\theta}_{n}}\mu_{-,1}(\hat{z}_{n}),\\
&\mathop{Res}_{z=\hat{z}_{n}}\left[\frac{\mu_{+,2}}{s_{22}}\right]=
\hat{A}_{n}e^{2i\hat{\theta}_{n}}\left[\mu'_{-,1}(\hat{z}_{n})+
\left(\hat{B}_{n}+2i\hat{\theta}'_{n}\right)\mu_{-,1}(\hat{z}_{n})\right],
\end{align}
where
\begin{align*}
\tilde{A}_{n}=\frac{2\tilde{b}_{n}}{s''_{22}(z^{*}_{n})}, \quad
\tilde{B}_{n}=\frac{\tilde{d}_{n}}{\tilde{b}_{n}}
-\frac{s'''_{22}(z^{*}_{n})}{3s''_{22}(z^{*}_{n})},\\
\check{A}_{n}=\frac{2\check{b}_{n}}{s''_{11}(\check{z}_{n})}, \quad
\check{B}_{n}=\frac{\check{d}_{n}}{\check{b}_{n}}
-\frac{s'''_{11}(\check{z}_{n})}{3s''_{11}(\check{z}_{n})},\\
\hat{A}_{n}=\frac{2\hat{b}_{n}}{s''_{22}(\hat{z}_{n})}, \quad
\hat{B}_{n}=\frac{\hat{d}_{n}}{\hat{b}_{n}}
-\frac{s'''_{22}(\hat{z}_{n})}{3s''_{22}(\hat{z}_{n})}.
\end{align*}
According to the symmetry in \eqref{4.8} and \eqref{S1}, the following relation also can be obtained
\begin{align*}
&\tilde{A}_{n}=-A^{*}_{n}, \quad\quad\quad~\tilde{B}_{n}=B^{*}_{n},\\
&\hat{A}_{n}=-\frac{q^{4}_{0}q^{*}_{-}}{z^{4}_{n}q_{-}}A_{n},\quad
\hat{B}_{n}=\frac{q^{2}_{0}}{\hat{z}^{2}_{n}}B_{n}+\frac{2}{\hat{z}_{n}},\\
&\check{A}_{n}=-\hat{A}^{*}_{n}, \quad\quad\quad~\check{B}_{n}=\hat{B}^{*}_{n}.
\end{align*}

\subsection{Reconstruction the formula for potential}

For convenience, let
\begin{align*}
\xi_{n}:=z_{n}, \quad \xi_{N+n}:=\check{z}_{n}, \quad \xi^{*}_{n}:=\hat{z}_{n}, \quad \xi^{*}_{N+n}:=z^{*}_{n}.
\end{align*}
Similar to the case of the simple poles, to get a regular RHP, we need to subtract the asymptotic properties and the pole contributions. The jump condition becomes
\begin{align}\label{5.5}
\begin{aligned}
M^{-}&(z)-\mathbb{I}-\frac{i}{z}\sigma_{3}Q_{-}-\sum^{2N}_{n=1}
\left\{\frac{\mathop{Res}\limits_{z=\xi^{*}_{n}}M^{-}(z)}{z-\xi^{*}_{n}}
+\frac{\mathop{P_{-2}}\limits_{z=\xi^{*}_{n}}M^{-}(z)}{(z-\xi^{*}_{n})^{2}}
\frac{\mathop{Res}\limits_{z=\xi_{n}}M^{+}(z)}{z-\xi_{n}}
+\frac{\mathop{P_{-2}}\limits_{z=\xi_{n}}M^{+}(z)}{(z-\xi_{n})^{2}}\right\}\\
=&M^{+}(z)-\mathbb{I}-\frac{i}{z}\sigma_{3}Q_{-}-\sum^{2N}_{n=1}
\left\{\frac{\mathop{Res}\limits_{z=\xi^{*}_{n}}M^{-}(z)}{z-\xi^{*}_{n}}
+\frac{\mathop{P_{-2}}\limits_{z=\xi^{*}_{n}}M^{-}(z)}{(z-\xi^{*}_{n})^{2}}
\frac{\mathop{Res}\limits_{z=\xi_{n}}M^{+}(z)}{z-\xi_{n}}
+\frac{\mathop{P_{-2}}\limits_{z=\xi_{n}}M^{+}(z)}{(z-\xi_{n})^{2}}\right\}\\
&-M^{+}(z)G(z).
\end{aligned}
\end{align}

During the defining of $M^{\pm}(z)$, we have
\begin{align}\label{5.6}
\begin{gathered}
\mathop{Res}\limits_{z=\xi^{*}_{n}}M^{-}(z)=\left(0, \mathop{Res}\limits_{z=\xi^{*}_{n}}\left[\frac{\mu_{+,2}}{s_{22}}\right]\right),\quad
\mathop{P_{-2}}\limits_{z=\xi^{*}_{n}}M^{-}(z)=\left(0,
\mathop{P_{-2}}\limits_{z=\xi^{*}_{n}}\left[\frac{\mu_{+,2}}{s_{22}}\right]\right),\\
\mathop{Res}\limits_{z=\xi_{n}}M^{+}(z)=\left( \mathop{Res}\limits_{z=\xi_{n}}\left[\frac{\mu_{+,1}}{s_{11}}\right], 0\right),\quad
\mathop{P_{-2}}\limits_{z=\xi_{n}}M^{+}(z)=\left( \mathop{P_{-2}}\limits_{z=\xi_{n}}\left[\frac{\mu_{+,1}}{s_{11}}\right], 0\right).
\end{gathered}
\end{align}
It is easy to see that the left side of \eqref{5.6} is analytic in $D^{-}$, and the right side except the last item is analytic in $D^{+}$. Resorting to Plemelj' formulae, the solution of RHP is
\begin{align}\label{5.7}
\begin{aligned}
M(z)
=&\mathbb{I}+\frac{i}{z}\sigma_{3}Q_{-}+\sum^{2N}_{n=1}
\left\{\frac{\mathop{Res}\limits_{z=\xi^{*}_{n}}M^{-}(z)}{z-\xi^{*}_{n}}
+\frac{\mathop{P_{-2}}\limits_{z=\xi^{*}_{n}}M^{-}(z)}{(z-\xi^{*}_{n})^{2}}
\frac{\mathop{Res}\limits_{z=\xi_{n}}M^{+}(z)}{z-\xi_{n}}
+\frac{\mathop{P_{-2}}\limits_{z=\xi_{n}}M^{+}(z)}{(z-\xi_{n})^{2}}\right\}\\
&+\frac{1}{2\pi i}\int_{\Sigma}\frac{M^{+}(s)G(s)}{s-z}ds,\quad z\in\mathbb{C}\backslash\Sigma.
\end{aligned}
\end{align}

Evaluating the first column of \eqref{5.7} at $z=\xi^{*}_{k}$, we get
\begin{align}\label{5.8}
\begin{aligned}
\mu_{-,1}(\xi^{*}_{k})=&\left(\begin{array}{c}
                         1 \\
                         \frac{q_{-}}{\xi^{*}_{k}}
                       \end{array}\right)
+\frac{1}{2\pi i}\int_{\Sigma}\frac{((M^{+}G)(s))_{1}}{s-\xi^{*}_{k}}ds\\
&+\sum^{2N}_{n=1}\left[C_{n}(\xi^{*}_{k})\mu'_{-,2}(\xi_{n})+C_{n}(\xi^{*}_{k})
\left(D_{n}+\frac{1}{\xi^{*}_{k}-\xi_{n}}\right)\mu_{-,2}(\xi_{n})\right],
\end{aligned}
\end{align}
here $C_{n}(z)=\frac{A_{n}e^{-2i\theta_{n}}}{z-\xi_{n}}$ and
$D_{n}=B_{n}-2i\theta'_{n}$.

By the symmetry of \eqref{Q45}, we know that $\mu_{-,1}(\xi^{*}_{k})=-\frac{\xi_{k}}{q^{*}_{-}}\mu_{-,2}(\xi_{k})$, then \eqref{5.8} can be rewritten as
\begin{align}\label{5.9}
\begin{aligned}
0=&\left(\begin{array}{c}
                         1 \\
                         \frac{q_{-}}{\xi^{*}_{k}}
                       \end{array}\right)
+\frac{1}{2\pi i}\int_{\Sigma}\frac{((M^{+}G)(s))_{1}}{s-\xi^{*}_{k}}ds\\
&+\sum^{2N}_{n=1}\left\{C_{n}(\xi^{*}_{k})\mu'_{-,2}(\xi_{n})+C_{n}(\xi^{*}_{k})
\left[\left(D_{n}+\frac{1}{\xi^{*}_{k}-\xi_{n}}\right)+\frac{\delta_{kn}\xi_{k}}{q^{*}_{-}}\right]
\mu_{-,2}(\xi_{n})\right\},
\end{aligned}
\end{align}
where $\delta_{kn}$ is Kronecker delta. In order to obtain further results at the derivative of the eigenfunction, we derive the derivative of \eqref{5.9} with respect to $z$ at $z=\xi^{*}_{k}$ and obtain
\begin{align}\label{5.10}
\begin{aligned}
0=&\left(\begin{array}{c}
                         1 \\
                         -\frac{q_{-}}{(\xi^{*}_{k})^{2}}
                       \end{array}\right)
+\frac{1}{2\pi i}\int_{\Sigma}\frac{((M^{+}G)(s))_{1}}{(s-\xi^{*}_{k})^{2}}ds\\
&-\sum^{2N}_{n=1}\left\{\left(\frac{C_{n}(\xi^{*}_{k})}{\xi^{*}_{k}-\xi_{n}}
-\frac{\xi^{*}_{k}\delta_{kn}}{q^{2}_{0}q^{*}_{-}}\right)\mu'_{-,2}(\xi_{n})
+\left[\frac{C_{n}(\xi^{*}_{k})}{\xi^{*}_{k}-\xi_{n}}
\left(D_{n}+\frac{2}{\xi^{*}_{k}-\xi_{n}}\right)-\frac{\xi^{2}_{k}}{q^{2}_{0}q^{*}_{-}}\right]
\mu_{-,2}(\xi_{n})\right\}.
\end{aligned}
\end{align}
Let $M=M^{-}$, using the same method as in the simple poles get
\begin{align}\label{5.11}
q=q_{-}+\frac{i}{2\pi}\int_{\Sigma}((M^{+}G)(s))_{21}ds-
\sum^{2N}_{n=1}A_{n}e^{-2i\theta_{n}}\left[
\mu'_{-,22}(\xi_{n})+D_{n}\mu_{-,22}(\xi_{n})\right].
\end{align}

\subsection{Trace formulate and theta condition}

According to analyticity of $s_{11}$ and $s_{22}$ and the discrete spectrum, it is obvious to know that $\beta^{+}_{2}(z)$ has the double poles in $z_{n}$ and $-q^{2}_{0}/z^{*}_{n}$, and $\beta^{-}_{2}(z)$ in $z^{*}_{n}$ and $-q^{2}_{0}/z_{n}$ respectively for $n=1,2,\cdots,N$. For
\begin{align}\label{5.12}
\begin{split}
\beta^{+}_{2}(z)=s_{11}(z)\prod^{N}_{n=1}\frac{(z-z^{*}_{n})^{2}(z+q^{2}_{0}/z_{n})^{2}}
{(z-z_{n})^{2}(z+q^{2}_{0}/z^{*}_{n})^{2}},\\
\beta^{-}_{2}(z)=s_{22}(z)\prod^{N}_{n=1}\frac{(z-z_{n})^{2}(z+q^{2}_{0}/z^{*}_{n})^{2}}
{(z-z^{*}_{n})^{2}(z+q^{2}_{0}/z_{n})^{2}},
\end{split}
\end{align}
$\beta^{+}_{2}(z)$ and $\beta^{-}_{2}(z)$ are analytic in $D^{+}$ and $D^{-}$ respectively
 and have no zero point. Furthermore, their asymptotic properties are the same as $s_{11}(z)$ and $s_{22}(z)$ when $z\rightarrow\infty$. Thus there have $\beta^{+}_{2}(z)\beta^{-}_{2}(z)=s_{11}(z)s_{22}(z)$ for $z\in\Sigma$. Again by $detS(z)=1$, we know
\begin{align}
\beta^{+}_{2}(z)\beta^{-}_{2}(z)=\frac{1}{1+\rho(z)\rho^{*}(z^{*})},\quad z\in\Sigma.
\end{align}
Let's take the logarithms for the above equation and apply the Plemelj's formulae, it is easy to obtain that
\begin{align}\label{5.13}
\log\beta^{\pm}_{2}(z)=-\frac{1}{2\pi i}\int_{\Sigma}\frac{\log[1+\rho(\zeta)\rho^{*}(\zeta^{*})]}
{\zeta-(z\pm i0)}d\zeta, \quad z\in D^{\pm}.
\end{align}

Similar to the calculation in the simple pole, trace formulate and theta condition are
\begin{align}\label{5.14}
\begin{split}
&s_{11}(z)=exp\left[-\frac{1}{2\pi i}\int_{\Sigma}\frac{\log[1+\rho(\zeta)\rho^{*}(\zeta^{*})]}
{\zeta-z}d\zeta\right]\prod^{N}_{n=1}\frac{(z-z_{n})^{2}(z+q^{2}_{0}/z^{*}_{n})^{2}}
{(z-z^{*}_{n})^{2}(z+q^{2}_{0}/z_{n})^{2}},\quad z\in D^{+},\\
&s_{22}(z)=exp\left[-\frac{1}{2\pi i}\int_{\Sigma}\frac{\log[1+\rho(\zeta)\rho^{*}(\zeta^{*})]}
{\zeta-z}d\zeta\right]\prod^{N}_{n=1}\frac{(z-z^{*}_{n})^{2}(z+q^{2}_{0}/z_{n})^{2}}
{(z-z_{n})^{2}(z+q^{2}_{0}/z^{*}_{n})^{2}},\quad z\in D^{-},
\end{split}
\end{align}
and
\begin{align}\label{5.15}
arg\left(\frac{q_{+}}{q_{-}}\right)=\frac{1}{2\pi }\int_{\Sigma}
\frac{\log[1+\rho(\zeta)\rho^{*}(\zeta^{*})]}{\zeta-z}d\zeta
+8\sum^{N}_{n=1}argz_{n}.
\end{align}

\subsection{Reflection-less potential}

In this section, we will consider the case of gradual disappearance of the reflection coefficient, i.e., $G(z)=0$. From \eqref{5.11}, the solution of the equation is
\begin{align}\label{5.16}
q=q_{-}-
\sum^{2N}_{n=1}A_{n}e^{-2i\theta_{n}}\left[
\mu'_{-,22}(\xi_{n})+D_{n}\mu_{-,22}(\xi_{n})\right].
\end{align}

In the following, we will focus on solving $\mu'_{-,22}(\xi_{n})$ and $\mu_{-,22}(\xi_{n})$.

When $G(z)=0$, through \eqref{5.9} and \eqref{5.10} we have
\begin{align}\label{5.17}
\begin{aligned}
0=\frac{q_{-}}{\xi^{*}_{k}}
+\sum^{2N}_{n=1}\left\{C_{n}(\xi^{*}_{k})\mu'_{-,2}(\xi_{n})+C_{n}(\xi^{*}_{k})
\left[\left(D_{n}+\frac{1}{\xi^{*}_{k}-\xi_{n}}\right)+\frac{\delta_{kn}\xi_{k}}{q^{*}_{-}}\right]
\mu_{-,2}(\xi_{n})\right\},
\end{aligned}
\end{align}
and
\begin{align}\label{5.18}
\begin{aligned}
0=\frac{q_{-}}{(\xi^{*}_{k})^{2}}
+\sum^{2N}_{n=1}&\left\{\left(\frac{C_{n}(\xi^{*}_{k})}{\xi^{*}_{k}-\xi_{n}}
-\frac{\xi^{*}_{k}\delta_{kn}}{q^{2}_{0}q^{*}_{-}}\right)\mu'_{-,2}(\xi_{n})\right.\\
&\left.+\left[\frac{C_{n}(\xi^{*}_{k})}{\xi^{*}_{k}-\xi_{n}}
\left(D_{n}+\frac{2}{\xi^{*}_{k}-\xi_{n}}\right)-\frac{\xi^{2}_{k}}{q^{2}_{0}q^{*}_{-}}\right]
\mu_{-,2}(\xi_{n})\right\}.
\end{aligned}
\end{align}

Defining
\begin{align*}
&X_{n}=\mu_{-,22}(\xi_{n}),\qquad\quad\quad\quad\quad\quad\quad\quad\quad\quad\quad X_{2N+n}=\mu'_{-,22}(\xi_{n}),\\
&V_{n}=\frac{q_{-}}{(\xi_{n})},\qquad\quad\quad\quad\quad\quad\quad\quad\quad\quad\quad\quad~~~ V_{2N+n}=\frac{q_{-}}{(\xi_{n})^{2}},\\
&M_{k,n}=C_{n}(\xi^{*}_{k})\left(D_{n}+\frac{1}{\xi^{*}_{k}-\xi_{n}}\right)
+\frac{\delta_{kn}\xi_{k}}{q^{*}_{-}},\qquad~~~
M_{k,2N+n}=C_{n}(\xi^{*}_{k}),\\
&M_{2N+k,n}=\frac{C_{n}(\xi^{*}_{k})}{\xi^{*}_{k}-\xi_{n}}
\left(D_{n}+\frac{2}{\xi^{*}_{k}-\xi_{n}}\right)-\frac{\xi^{2}_{k}}{q^{2}_{0}q^{*}_{-}},\quad
M_{2N+k,2N+n}=\frac{C_{n}(\xi^{*}_{k})}{\xi^{*}_{k}-\xi_{n}}
-\frac{\xi^{*}_{k}\delta_{kn}}{q^{2}_{0}q^{*}_{-}},
\end{align*}
with $n=1,2,\cdots,2N$.

Therefore, \eqref{5.17} and \eqref{5.18} are expressed in component form as
\begin{align}\label{5.19}
MX=V.
\end{align}
By Cramer's Rule, the solution of \eqref{5.19} is
\begin{align}
X_{n}=\frac{detM^{ext}_{n}}{detM},
\end{align}
where $M^{ext}_{n}=(M_{1},\cdots,M_{n-1},V,M_{n+1},\cdots,M_{2N+n})$.

\subsection{Soliton solutions with the double poles}

In this subsection, we mainly discuss the exact solution images obtained in the previous section and analyze their dynamic behavior. Let $N=1$, the image of the solution of the equation is as follows.
\\

\noindent
{\rotatebox{0}{\includegraphics[width=3.5cm,height=3.5cm,angle=0]{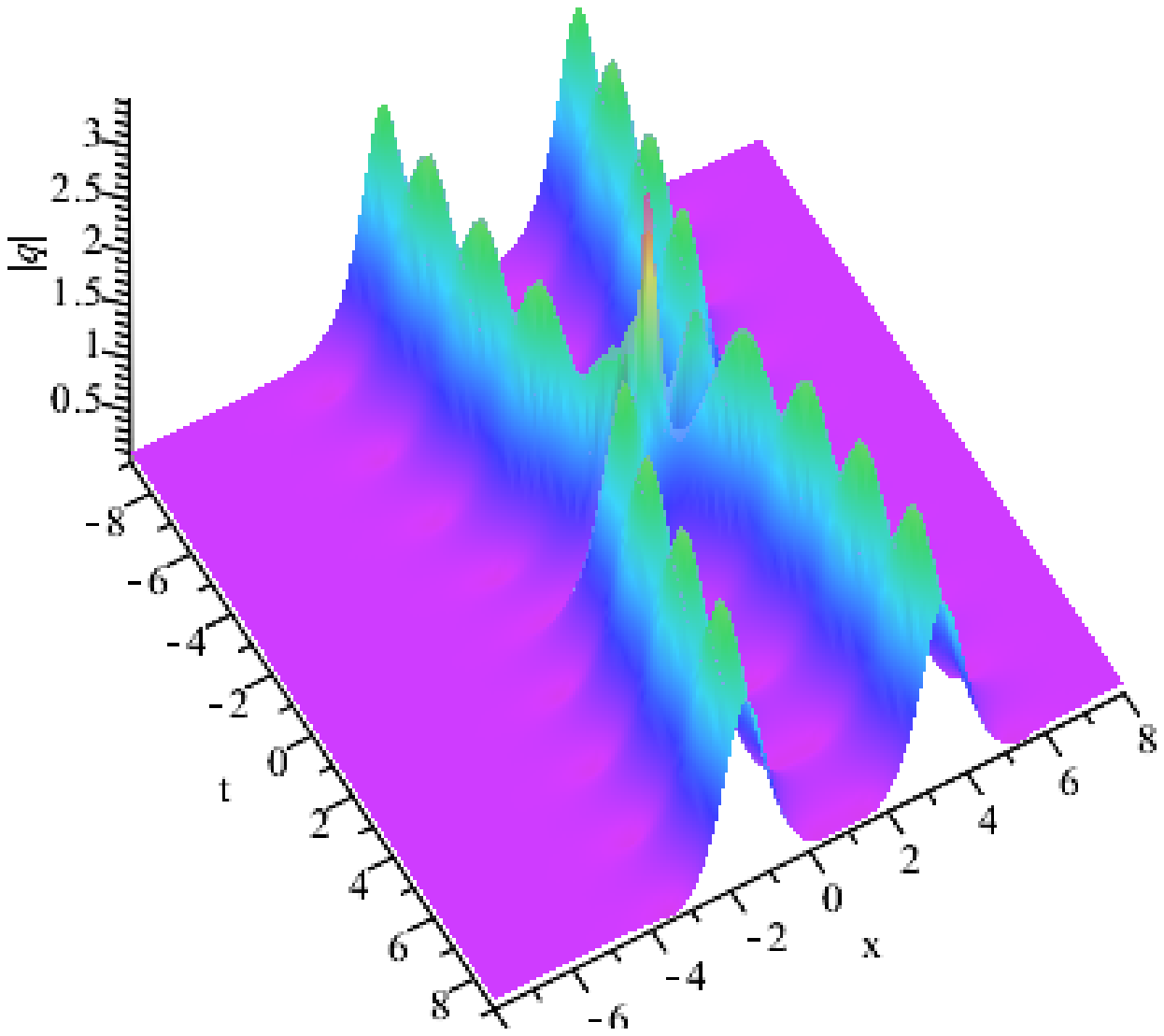}}}
~~~~
{\rotatebox{0}{\includegraphics[width=3.5cm,height=3.5cm,angle=0]{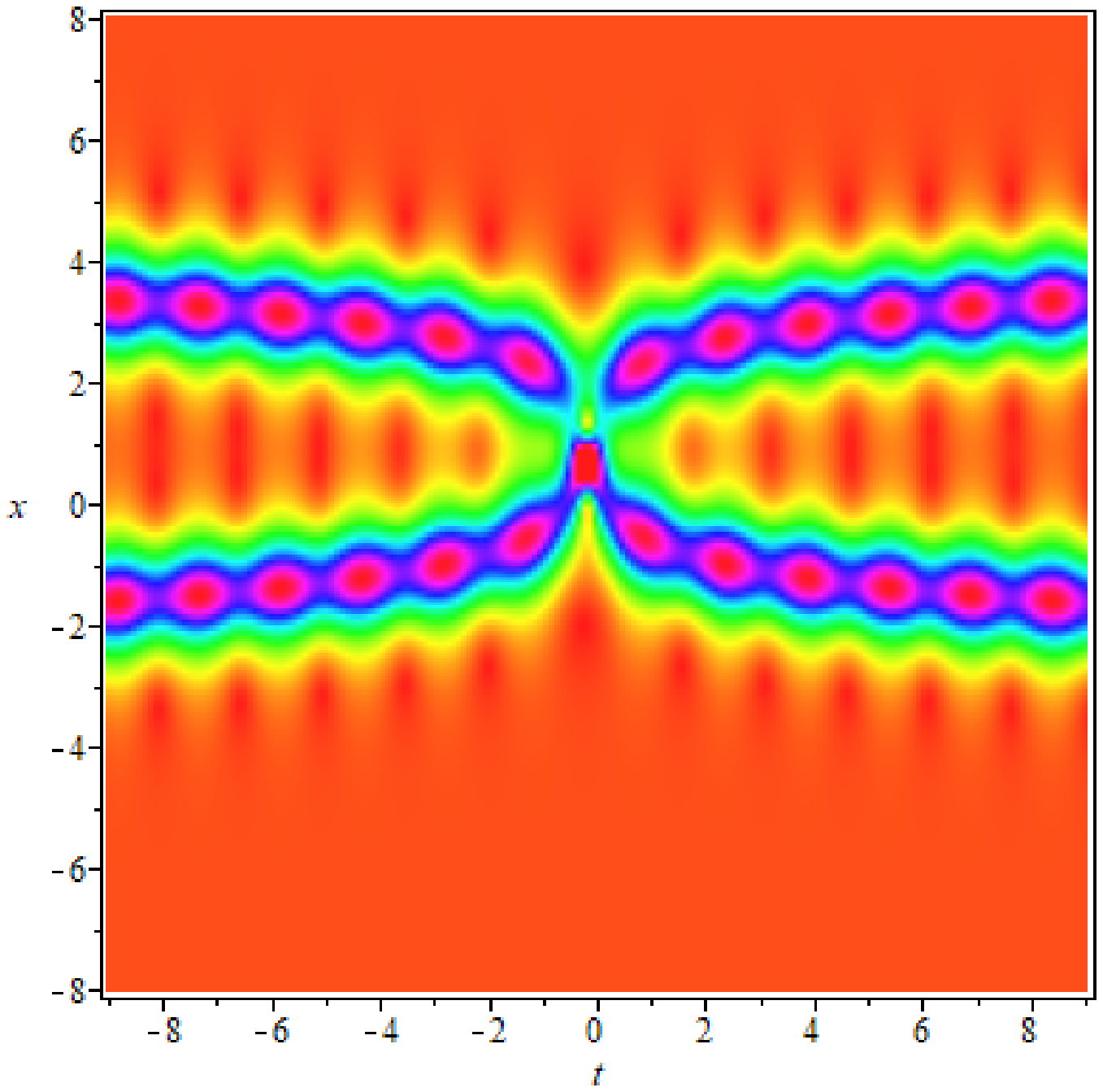}}}
\qquad\quad
{\rotatebox{0}{\includegraphics[width=3.5cm,height=3.5cm,angle=0]{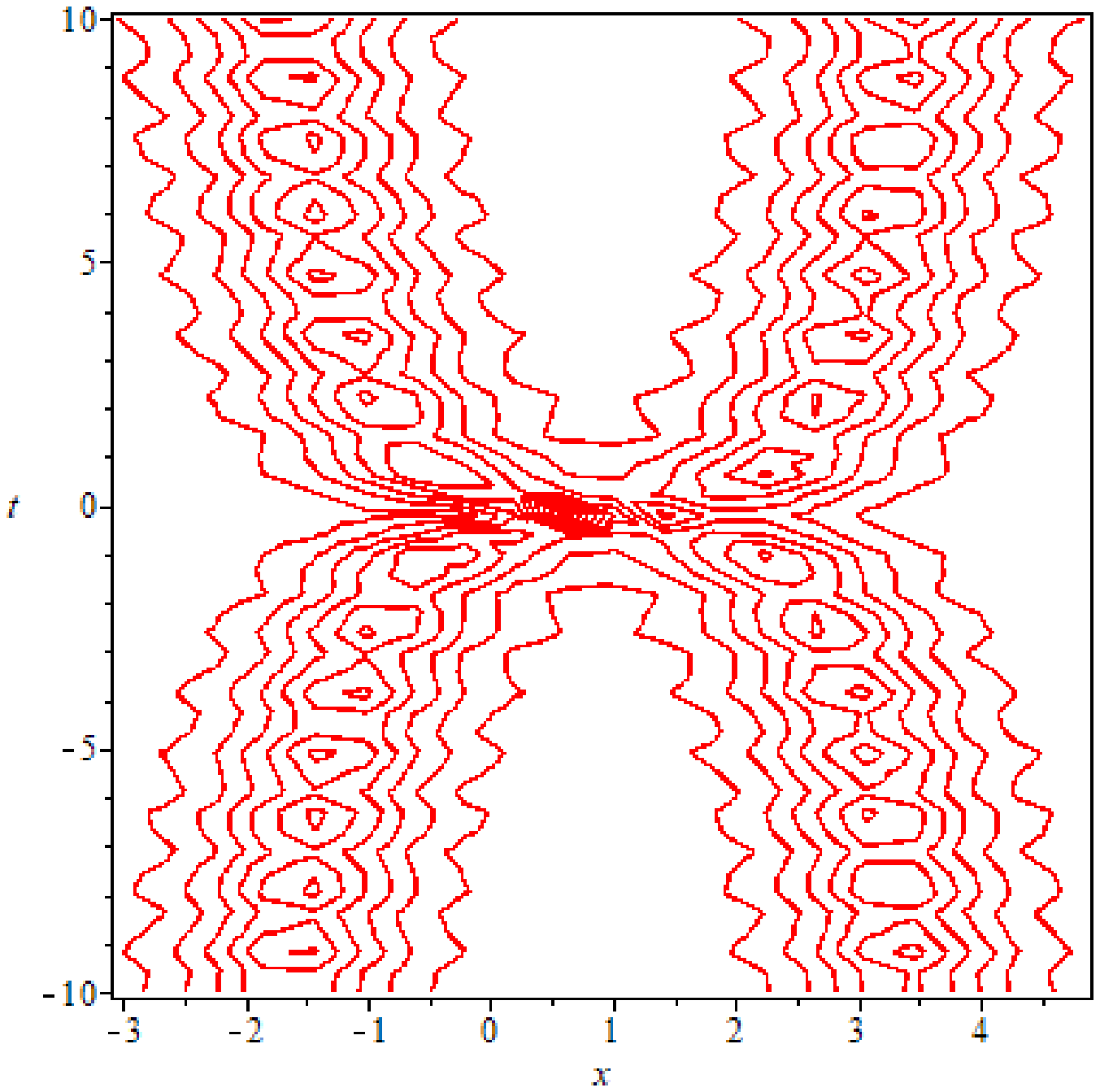}}}

\qquad\quad $(a)$
\qquad\qquad\qquad\qquad\qquad\qquad $(b)$ \quad\qquad\qquad\qquad\qquad\qquad\qquad $(c)$

\noindent
{\rotatebox{0}{\includegraphics[width=3.5cm,height=3.5cm,angle=0]{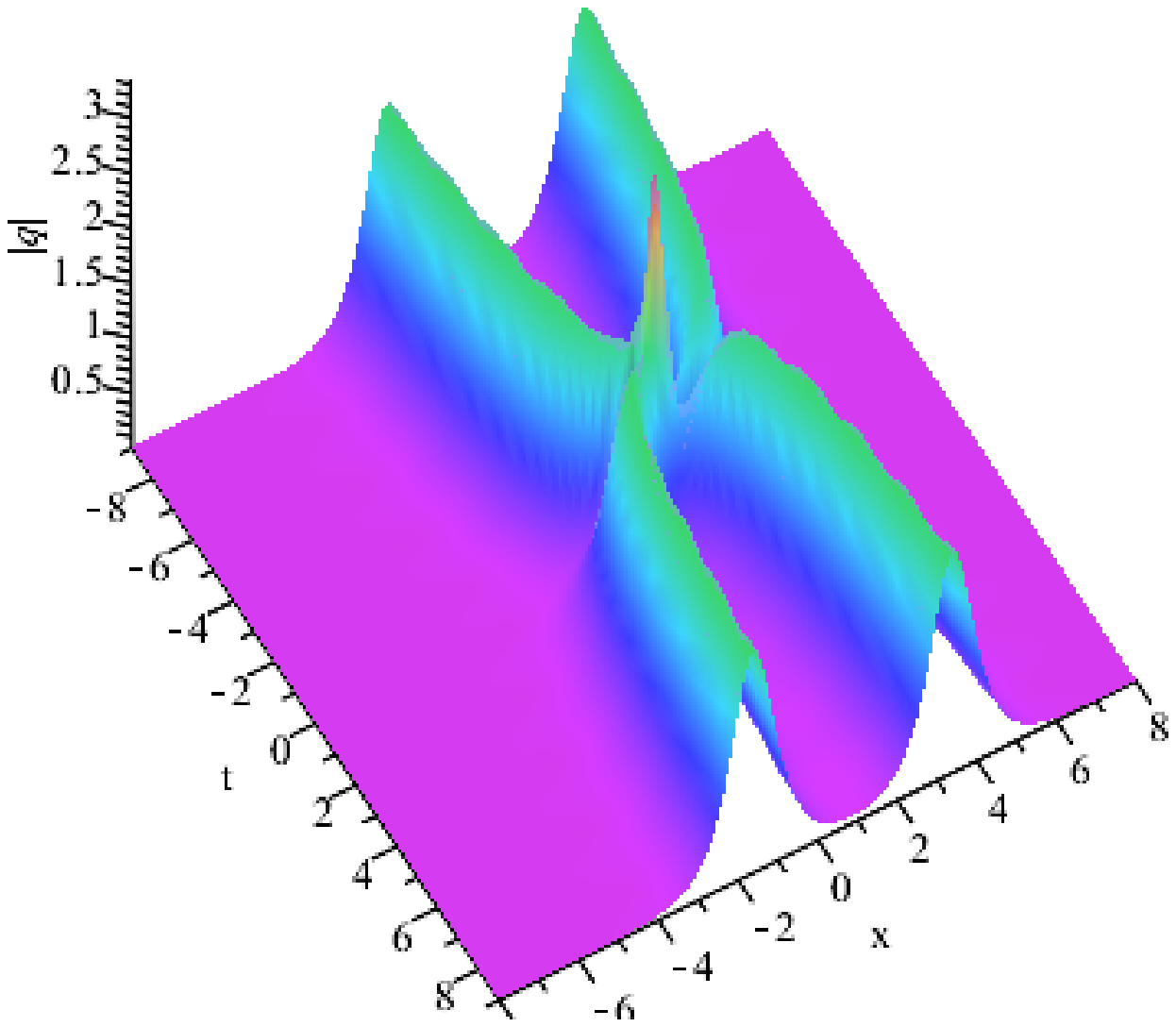}}}
~~~~
{\rotatebox{0}{\includegraphics[width=3.5cm,height=3.5cm,angle=0]{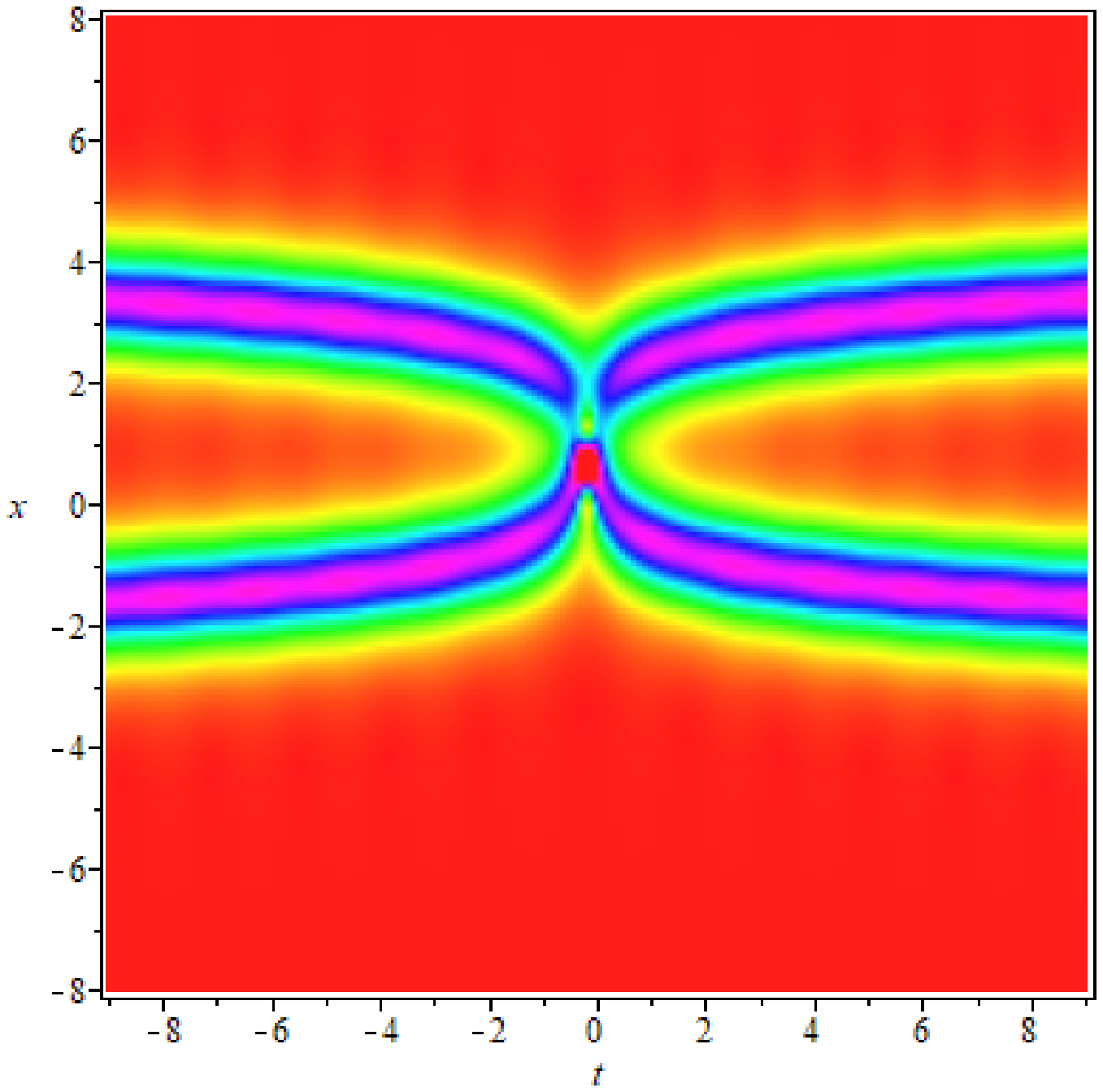}}}
\qquad\quad
{\rotatebox{0}{\includegraphics[width=3.5cm,height=3.5cm,angle=0]{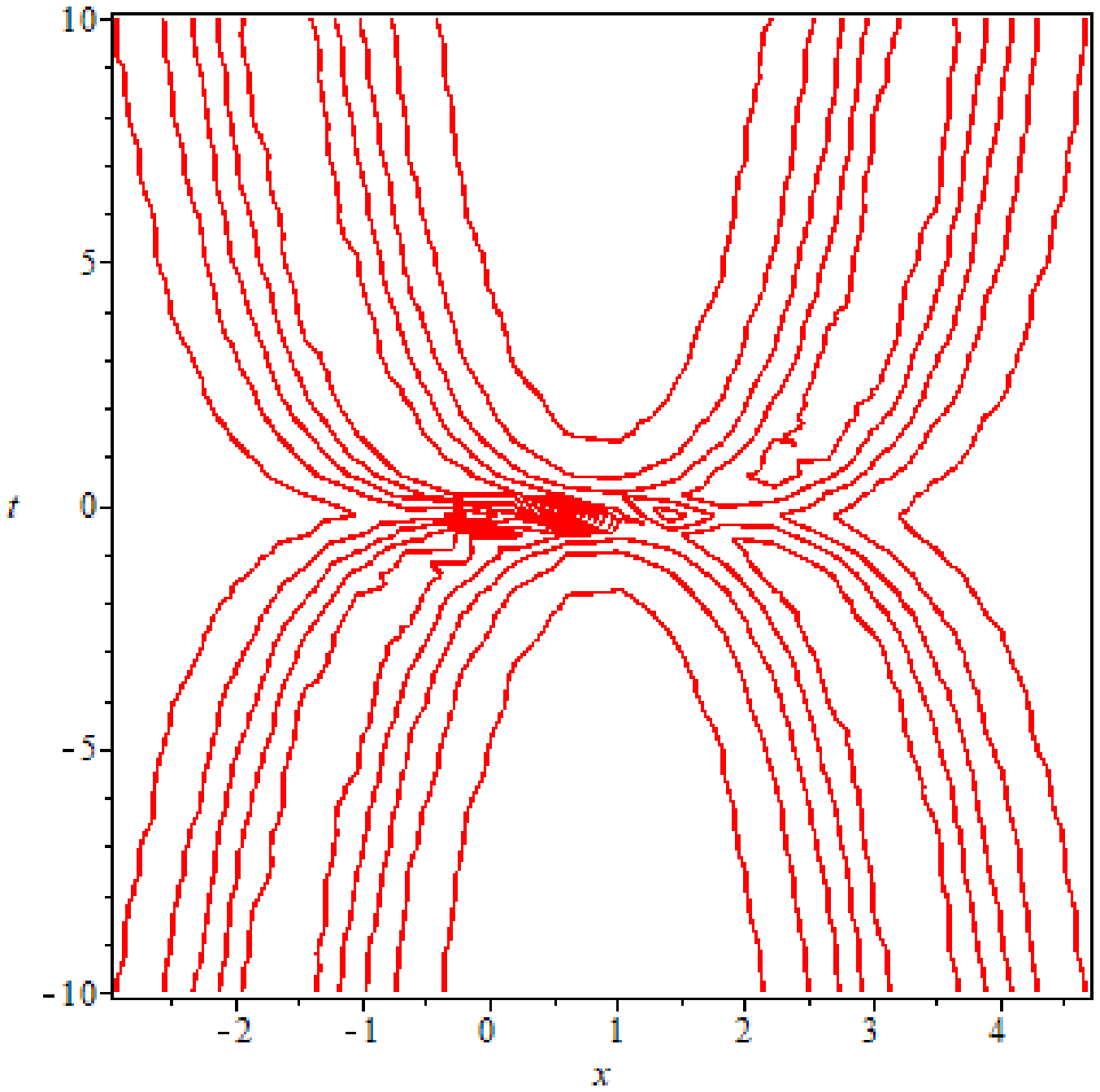}}}

\qquad\quad $(d)$
\qquad\qquad\qquad\qquad\qquad\qquad $(e)$ \quad\qquad\qquad\qquad\qquad\qquad\qquad $(f)$

\noindent
{\rotatebox{0}{\includegraphics[width=3.5cm,height=3.5cm,angle=0]{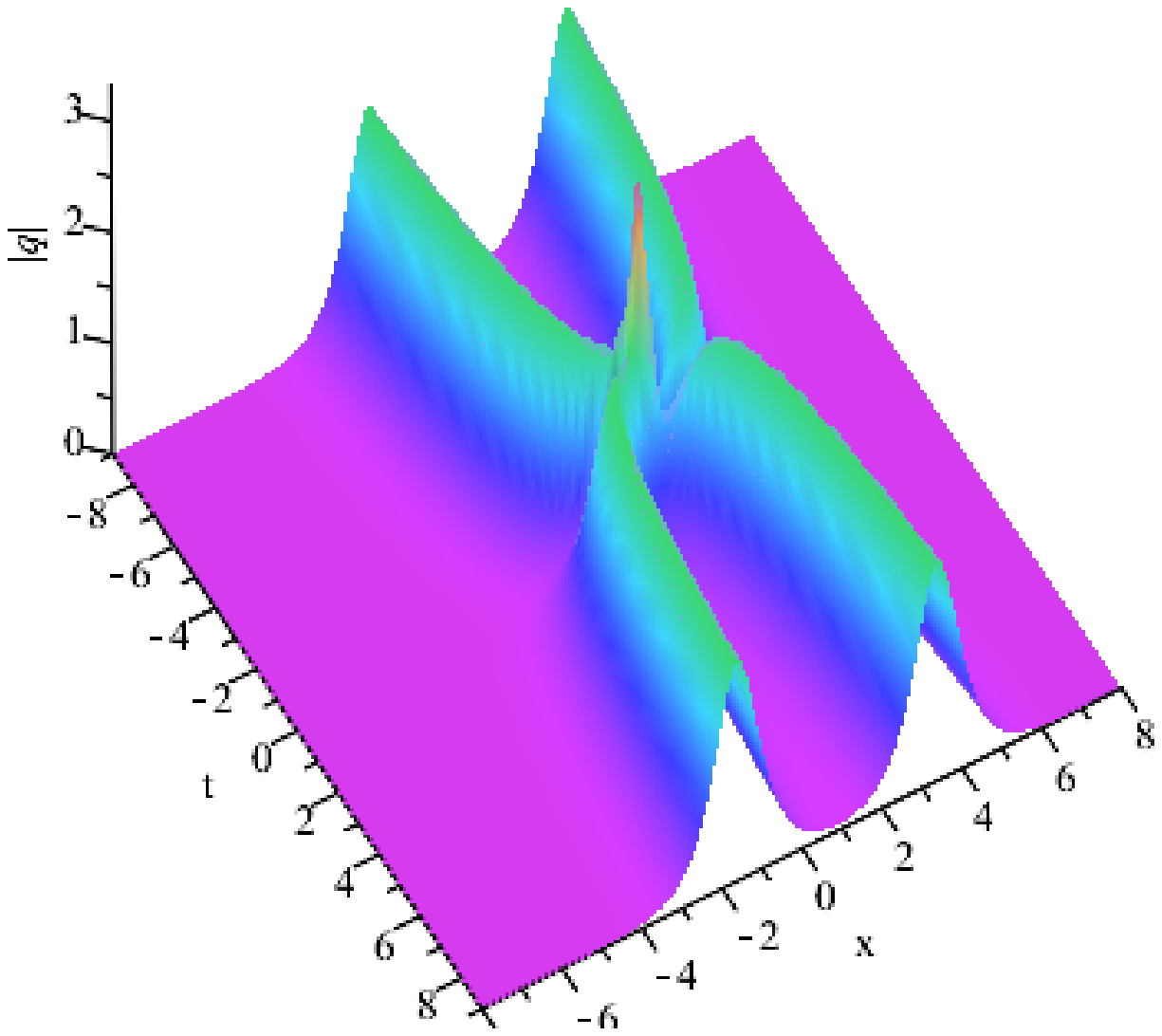}}}
~~~~
{\rotatebox{0}{\includegraphics[width=3.5cm,height=3.5cm,angle=0]{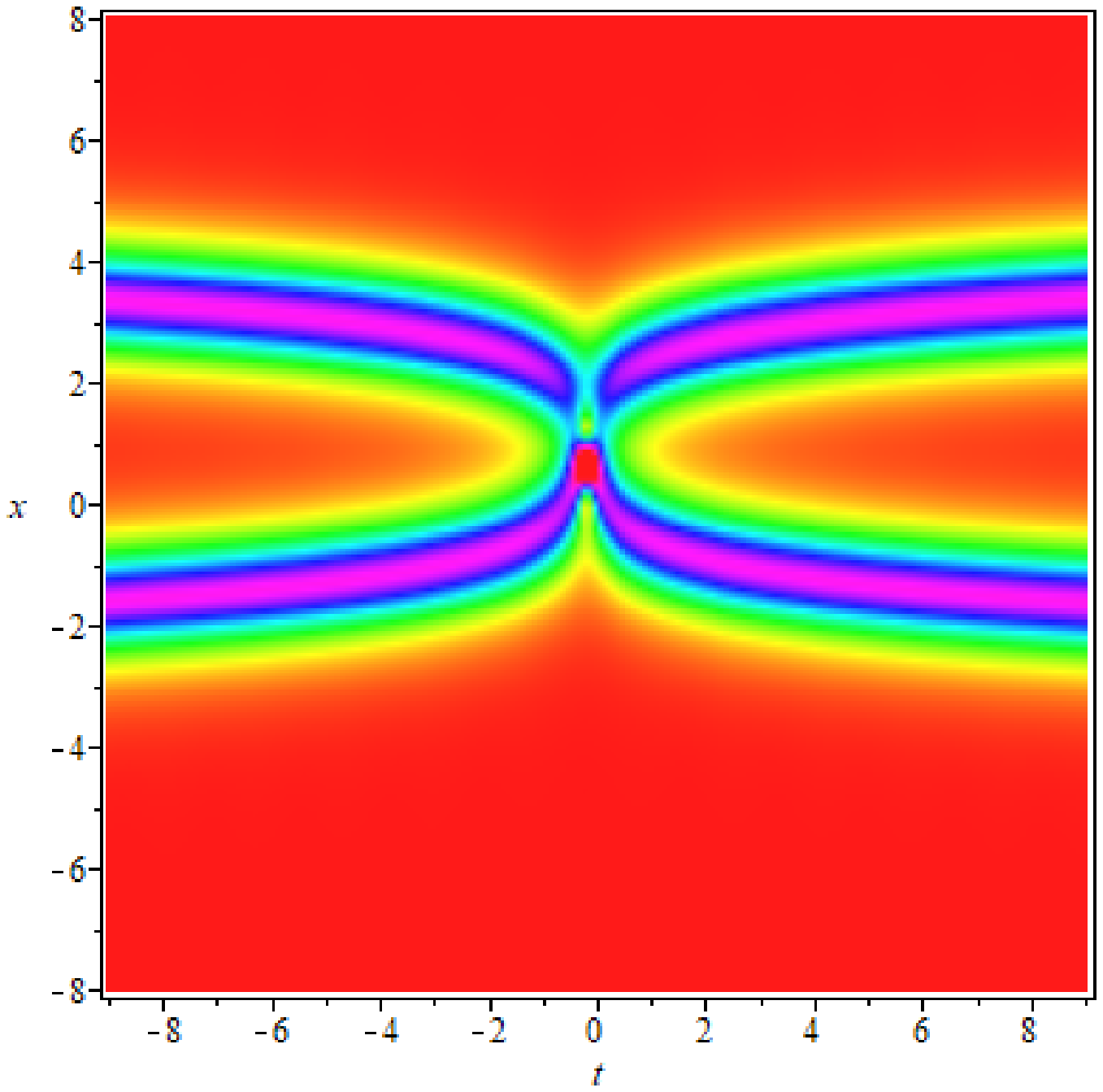}}}
\qquad\quad
{\rotatebox{0}{\includegraphics[width=3.5cm,height=3.5cm,angle=0]{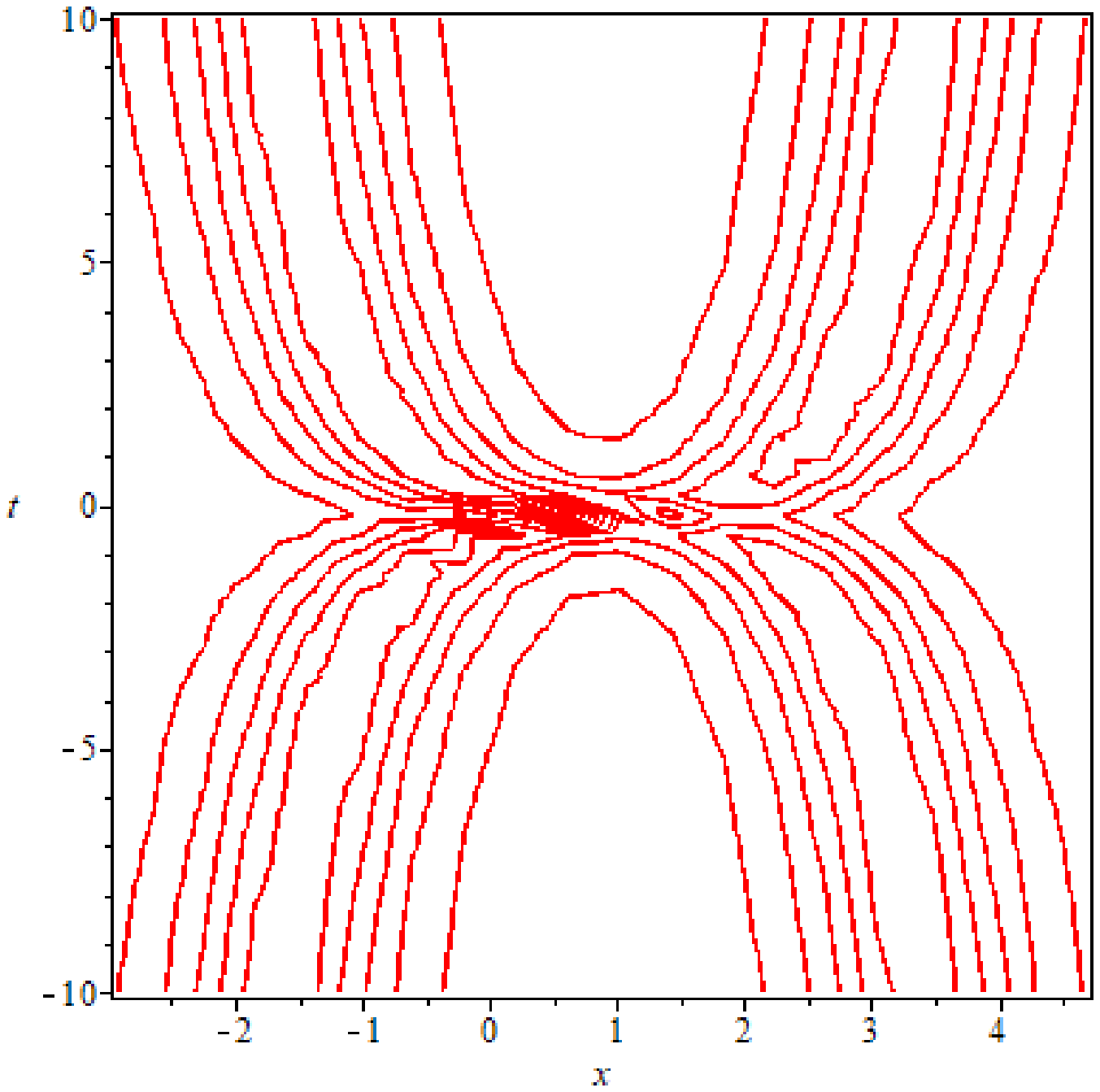}}}

\qquad\quad $(g)$
\qquad\qquad\qquad\qquad\qquad\qquad $(h)$ \quad\qquad\qquad\qquad\qquad\qquad\qquad $(i)$\\
\noindent { \small \textbf{Figure 5.}
The breather wave solution for $|q|$ with the parameters selection
$\delta=1, \xi_{1}=2i$.
$\textbf{(a)}$ the soliton solution with $q_{-}=0.1$,
$\textbf{(b)}$ density plot corresponding to (a),
$\textbf{(c)}$ the contour line of the soliton solution corresponding to (a),
$\textbf{(d)}$ the soliton solution with $q_{-}=0.01$,
$\textbf{(e)}$ density plot corresponding to (d),
$\textbf{(f)}$ the contour line of the soliton solution corresponding to (d),
$\textbf{(g)}$ the soliton solution with $q_{-}=0.001$,
$\textbf{(h)}$ density plot corresponding to (g),
$\textbf{(i)}$ the contour line of the soliton solution corresponding to (g).}\\

In Fig.5, we separately discuss the images of $q_{-}$ at different values, and find that when $q_{-}$ approaches 0, the breathing wave approaches a bright soliton.

\section{Conclusions and discussions}

In this work, we mainly solve the exact solution of the sixtic Schr\"{o}dinger equation based on the generalized Riemann-Hilbert problem with non-zero boundary values.
Since the non-zero boundary considerations is considered here, the sixtic Schr\"{o}dinger equation needs to be transformed first, so that the non-zero boundary does not depend on $t$. In order to avoid the multi-value problem, Riemann surface and uniformization coordinate  are introduced to transform the original spectrum problem into a new spectrum problem. On this basis, we discuss the analysis, symmetry and asymptotic properties of the Jost functions and the scattering matrix, which will prepare for the establishment of a generalized Riemann-Hilbert problem.
In inverse scattering, it is divided into two cases: simple pole and double pole. Doing the same process, we reconstruct the formula for potential according to the discrete spectrum and residue conditions, and discuss the trace formula and theta condition. Finally, the soliton solutions of the equation are solved without reflection potential.
In addition, we have made vivid image descriptions of the exact solutions in simple poles and double poles respectively by selecting appropriate parameters, which is of great help to the study of the propagation of the equation and its dynamic behavior. We also hope that our results will be of great significance to the study of Schr\"{o}dinger equation.

\section*{Acknowledgements}

This work was supported by  the National Natural Science Foundation of China under Grant No. 11975306, the Natural Science Foundation of Jiangsu Province under Grant No. BK20181351, the Six Talent Peaks Project in Jiangsu Province under Grant No. JY-059, the Fundamental Research Fund for the Central Universities under the Grant Nos. 2019ZDPY07 and 2019QNA35, and the Assistance Program for Future Outstanding Talents of China University of Mining and Technology under Grant No.2020WLJCRCZL074.

\end{document}